\documentclass[conference]{IEEEtran}
%


\hyphenation{op-tical net-works semi-conduc-tor}

\usepackage{graphicx}

\usepackage[]{caption}
\DeclareCaptionFont{mysize}{\fontsize{8}{9.6}\selectfont}
\captionsetup{font=mysize}
\usepackage[justification=centering,font=mysize]{subcaption}

\usepackage[justification=centering]{subcaption}
\usepackage{cite}

\usepackage{amsmath}
\usepackage{amssymb}

\usepackage{amsthm}
\usepackage[ruled,linesnumbered]{algorithm2e}
\usepackage{bbm}
\usepackage{makecell}
\usepackage{multirow}
\usepackage{pbox}
\usepackage{url}
\usepackage{xspace}
\usepackage[htt]{hyphenat}

\usepackage{tikz}
\usetikzlibrary{positioning}
\usetikzlibrary{calc}
\usepackage{pgfplots}
\usepgflibrary{shapes.geometric}
\usetikzlibrary{patterns,arrows}

\usepackage{alltt}
\usepackage[normalem]{ulem}

\usepackage{enumitem}
\usepackage{textcomp}

\theoremstyle{definition}
\newtheorem{mydef}{Definition}

\newtheorem{theorem}{Theorem}
\newtheorem{lemma}{Lemma}

\DeclareMathOperator{\rspn}{rsp}
\DeclareMathOperator{\pointloss}{point-loss}
\DeclareMathOperator{\loss}{Loss}
\newcommand{\etal}{{et al.}}

\newcommand{\eg}{{e.g.}}
\newcommand{\ie}{{i.e.}}
\newcommand{\ignore}[1]{}
\newcommand{\ph}[1]{\noindent \textbf{#1} ---}

\definecolor{orange}{rgb}{0.9,0.7,0.3}
\newcommand{\tofix}[1]{{#1}}
\newcommand{\barzan}[1]{}
\newcommand{\young}[1]{}
\newcommand{\mike}[1]{}

\newcommand{\delete}[1]{}
\newcommand{\addnew}[1]{#1}

\newcommand{\greedy}[0]{{\tt\small Greedy}\xspace}
\newcommand{\interchange}[0]{{\sf\small Interchange}\xspace}

\newcommand{\vas}[0]{{\sc VAS}\xspace}

\IEEEoverridecommandlockouts
\begin{document}

\pgfplotsset {
    every axis/.append style={font=\scriptsize}
}

\pgfplotsset{
compat=newest,
every axis legend/.append style={font=\scriptsize, column sep=5pt},
/pgfplots/ybar legend/.style={
    /pgfplots/legend image code/.code={%
        \draw[##1,/tikz/.cd, bar width=6pt, yshift=-0.25em, bar shift=0pt, xshift=0.8em]
    plot coordinates {(0cm,0.8em)};}
}
}

%
\title{Visualization-Aware Sampling\\ for Very Large Databases\thanks{\noindent This work
has been published in ICDE (IEEE International Conference on Data Engineering)
2016. 978-1-5090-2020-1/16/\$31.00 \textcopyright\ 2016 IEEE} \vspace{-6mm}}

\author{
\IEEEauthorblockN{Yongjoo Park, Michael Cafarella, Barzan Mozafari}
\IEEEauthorblockA{University of Michigan, Ann Arbor, USA\\
\{pyongjoo, michjc, mozafari\}@umich.edu}
\vspace{2mm}
}


%


\maketitle

\begin{abstract}
Interactive visualizations are crucial in \emph{ad hoc} data  exploration and
analysis.  However, with the growing number of massive datasets, generating
visualizations in interactive timescales is increasingly challenging.  One
approach for improving the speed of the visualization tool is via {\em data
reduction} in order to  reduce the computational overhead, but at a potential
cost in visualization accuracy.  Common data reduction techniques, such as
uniform and stratified sampling, do not exploit the fact that the sampled tuples
will be transformed into a visualization for human consumption.

We propose a {\em visualization-aware} sampling (\vas) that guarantees high
quality visualizations with a small subset of the entire dataset.  We validate
our method when applied to scatter and map plots for three common visualization
goals: regression, density estimation, and clustering.  The key to our sampling
method's success is in choosing a set of tuples that minimizes a visualization-inspired
loss function. While existing sampling approaches minimize
the error of aggregation queries, we
 focus on a loss function that maximizes the visual fidelity of scatter plots.
Our user study confirms that our proposed loss function
correlates strongly with user success in using the resulting visualizations. 
Our experiments show that (i) \vas\ improves user's success by up to
$35\%$ in various visualization tasks, and (ii) \vas\ can achieve a required
visualization quality up to $400\times$ faster.

\end{abstract}


%
\IEEEpeerreviewmaketitle


\section{Introduction}
\label{sec:intro}


Data scientists frequently rely on visualizations for analyzing data
and gleaning insight. For productive data exploration, analysts should
be able to produce \emph{ad hoc} visualizations in interactive time (a
well-established goal in the visualization and human-computer interaction (HCI) community~\cite{cottam:2013,
heer:2012, wickham:2013, liu2013immens, lins:2013, barnett:2013,
fisher:2012, fisher:2012b, cottam:2010, piringer:2009}).
However, with the rise of big data and the growing number of databases with millions or even billions of records,
 generating even simple visualizations 
 can take a considerable amount of time.  For example, as reported in Figure \ref{fig:intro_latency}, 
 we found that
 the industry standard Tableau visualization system takes over
 $4$ minutes on a high-end server to generate a scatterplot for a
 $50$M-tuple dataset that is already resident in memory.  (see Section \ref{sec:latency} for experimental details.)
  On the other hand, HCI
 researchers have found that visualizations must be generated in 500ms
 to 2 seconds in order for users to stay engaged and view the system as
 interactive \cite{miller1968response, shneiderman1984response,
 liueffects}.  Unfortunately,  dataset sizes are already growing faster than Moore's Law~\cite{ionurl} 
 (the rate at which our hardware is speculated to improve), 
 so technology trends will likely exacerbate rather than alleviate the problem.

This paper addresses the problem of interactive visualization in the
case of {\em scatterplots} and {\em map plots}.  Scatterplots are a
well-known visualization technique that represent database records
using dots in a 2D coordinate system.  For example, an engineer may
investigate the relationship between {\sf\small time-of-day} and
{\sf\small server-latency} by processing a database of Web server logs,
setting {\sf\small time-of-day} as the scatterplot's X-axis and the
{\sf\small server-latency} as its Y-axis.  Map plots display geographically-tied
values on a 2D plane.  Figure~\ref{fig:intro}(a) is an example of a
map plot, visualizing a GPS dataset from OpenStreetMap
project with
2B data points, each consisting of a  {\sf\small latitude}, {\sf\small
longitude},  and {\sf\small altitude} triplet ({\sf\small altitude} encoded with color).

\begin{figure}[t]
  \centering
  \begin{subfigure}[b]{0.48\columnwidth}
    \centering
    \includegraphics[width=1.1\columnwidth,height=31mm]{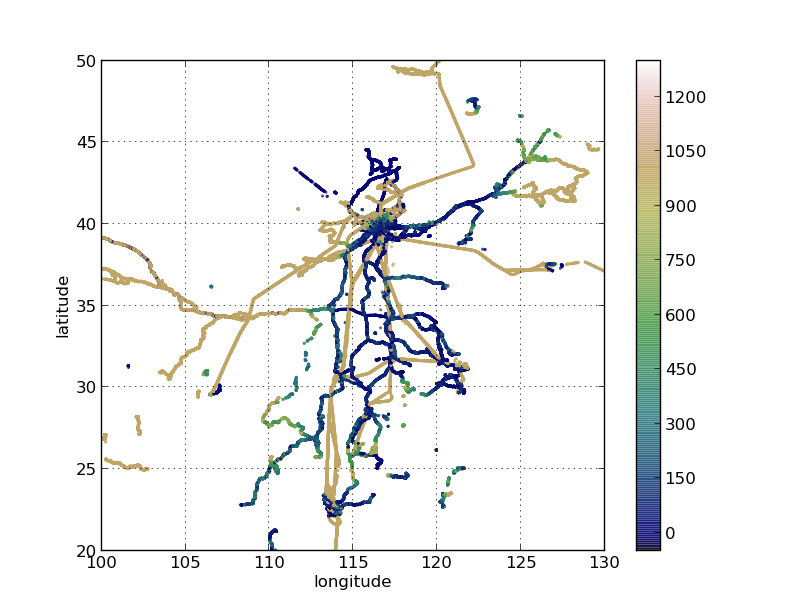}
    \caption{Stratified Sampling\\(overview)}
  \end{subfigure}
  ~
  \begin{subfigure}[b]{0.48\columnwidth}
    \includegraphics[width=1.1\columnwidth,height=31mm]{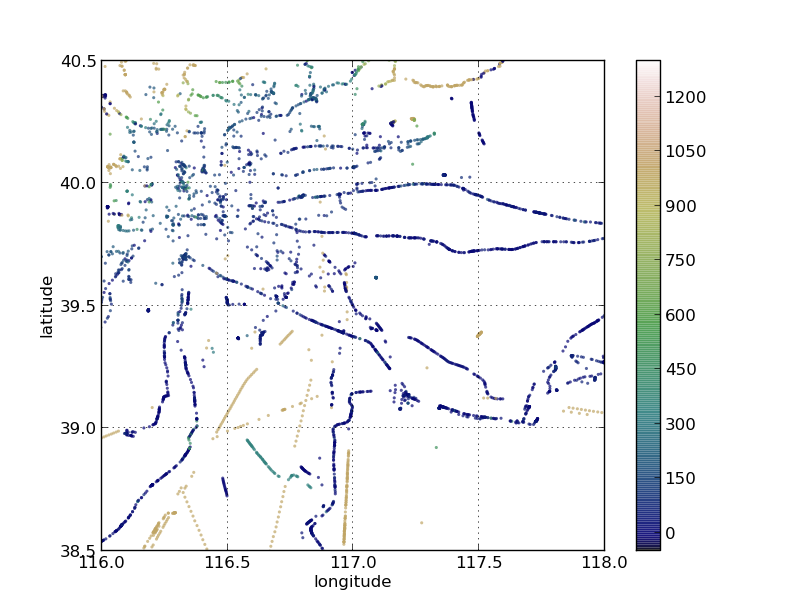}
    \caption{Stratified Sampling\\(zoom-in)}
  \end{subfigure}

  \begin{subfigure}[b]{0.48\columnwidth}
    \centering
    \includegraphics[width=1.1\columnwidth,height=31mm]{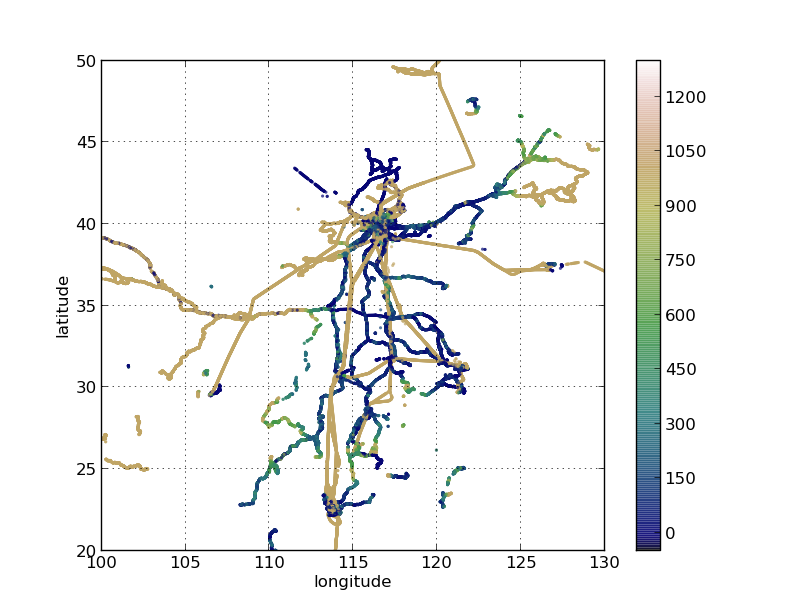}
    \caption{VAS (overview)}
  \end{subfigure}
  ~
  \begin{subfigure}[b]{0.48\columnwidth}
    \includegraphics[width=1.1\columnwidth,height=31mm]{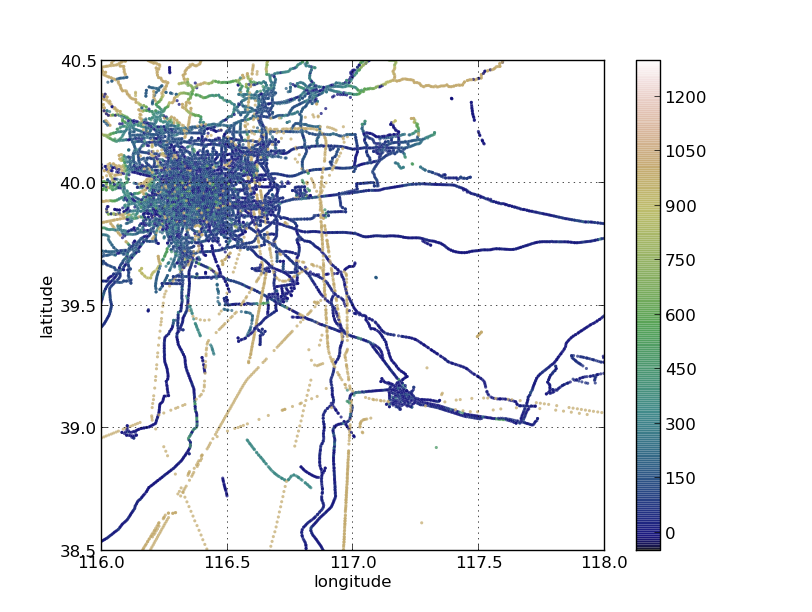}
    \caption{VAS (zoom-in)}
  \end{subfigure}

  \caption{Samples generated by fined-grained stratified sampling and our
    approach respectively. When the entire range is visualized, both methods seem to offer the
    visualization of the same quality. However, when zoomed-in views were requested,
    only our approach retained important structures of the database.}
  \label{fig:intro}
\end{figure}

%
%
%
%
%
%

One approach for reducing the time involved in visualization production
is via {\em data reduction}~\cite{battle:2013}.  Reducing the dataset
 reduces the amount of work for the visualization system (by
reducing the CPU, I/O, and rendering times) but at
a potential cost to the quality of output visualization.
Effective data reduction
will shrink the dataset as much as possible while still producing an
output that preserves all important information of the original
dataset.
Sampling is a popular database method for reducing the amount of data
to be processed, often in the context of approximate query processing~\cite{mozafari_eurosys2013,mozafari_cidr2015,mozafari_sigmod2014_abm,acharya1999aqua,babcock2003dynamic,chaudhuri2007optimized,hellerstein1997online,jermaine2008scalable,olston2009interactive}.
While uniform (random) sampling and stratified sampling are two of the
most common and effective approaches in approximate  query processing~\cite{approx_chapter},
they are not well-suited for generating scatter and map plots: they can both
fail to capture important features of the data if they are sparsely
represented~\cite{liu2013immens}.

Figure~\ref{fig:intro} depicts an example using the Geolife
  dataset~\cite{zheng2008understanding}.
This dataset contains GPS-recorded locations visited by the people living
in and around Beijing. In this example, we visualized 100K datapoints using both
stratified sampling and our approach.  For stratified
sampling, we created a 316-by-316 grid and set the strata sizes (the number of
datapoints in each cell) as
balanced as possible across the cells created by the grid. 
In the zoomed-out overview plots, the visualization quality of the two
competing methods seem nearly identical; however, when a zoomed-in plot is generated,
one can observe  that our proposed method delivers significantly richer information.

%
%

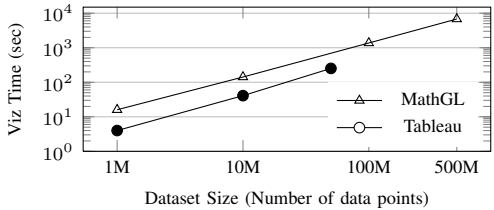
\begin{figure}[!t]
\centering

\begin{tikzpicture}
\begin{axis}[
width=70mm,
height=35mm,
xtick={1e6,1e7,1e8,5e8},
xticklabels={1M, 10M, 100M, 500M},
ytick={1,10,100,1000,1e4},
ymin=1,
ymax=15000,
xmode=log,
ymode=log,
xlabel=Dataset Size (Number of data points),
ylabel=Viz Time (sec),
legend style={draw=none,legend pos=south east},
ymajorgrids,
]
\addplot[mark=triangle]
table {
x	y
1000000	16
10000000	142
100000000	1373
500000000	6795
};
\addplot[mark=*]
table {
x	y
1000000	4
10000000	41
50000000	252
};
\addlegendentry{MathGL}
\addlegendentry{Tableau}
\end{axis}
\end{tikzpicture}
\caption{The latency for generating scatter plot visualizations using Tableau and MathGL (a library for scientific graphics).}
\label{fig:intro_latency}
\end{figure}

\ph{Previous Approaches}  Architecturally, our system is similar
to ScalaR~\cite{battle:2013}, which interposes a data reduction layer
between the visualization tool and a database backend; however, that
project uses simple uniform random
sampling. 
 Researchers have attempted a
number of approaches for improving visualization time, including
binned aggregation~\cite{lins:2013, liu2013immens, wickham:2013},
parallel rendering~\cite{cottam:2010, piringer:2009}, and incremental
visualization~\cite{fisher:2012, fisher:2012b}.  These methods are
orthogonal to the one we propose here.

\ph{Our Goals} This paper tackles the technical
challenge of creating a sampling strategy that will yield \emph{useful and
high-quality} scatter and map plots at \emph{arbitrary zooming resolutions}
with as \emph{few sampled tuples} as possible.  Figure \ref{fig:intro}(d)
shows the plot generated by our proposed method, which we call Visualization-Aware Sampling (\vas).  
Using the same number of tuples as random and stratified sampling, \vas\
yields a much higher-fidelity result. 
The use of \vas\ can be 
 specified as part of the queries
submitted by visualization tools to the database. Using \vas, the database returns an approximate
query answer within a specified time bound using
one of multiple pre-generated samples.
\vas\ chooses an appropriate
sample size by converting the specified
time bound into the number of tuples that can likely be processed
within that time bound.  \vas\ is successful because it samples data
points according to a visualization-oriented metric
that correlates well with user success across a range of
scatter and map plot tasks.


\ph{Contributions} We make the following contributions:
\begin{itemize}[nolistsep,itemsep=2pt]
\item We define the notion of \vas\  as an optimization problem
 (Section~\ref{sec:formulation}).
\item We prove that the \vas\ problem is NP-hard and an offer efficient
approximation algorithm. We establish a worst-case guarantee for our
approximate solution (Section~\ref{sec:solving}).  
\item In a user study, we show that our \vas\  is highly
correlated with the user's success rate in various visualization tasks.
We also evaluate the efficiency and effectiveness of our approximation
algorithm over several datasets.  We show
that \vas\ can deliver a visualization that has equal quality
with competing approaches, but using up to $400\times$  fewer data
points.  Alternatively, if \vas\ can process an equal number of data points
as competing methods, it can deliver a visualization with a significantly higher
quality (Section~\ref{sec:exp}).  
\end{itemize}

Finally, we cover related work in Section~\ref{sec:relatedwork} and
conclude with a discussion of future work in Section~\ref{sec:con}.

\ignore{
\vspace{0.4cm}
\noindent {\bf Our Approach ---}  
Through real-life user studies, we show that users are more
effective on a range of tasks when visualizing data samples that
were optimized for this metric.  We show that optimizing this metric
when sampling is NP-hard, and then propose an approximation algorithm.
We also propose a number of small optimizations that dramatically
speed up our sampling algorithm at a small cost in accuracy.  We have
implemented the method in a real middleware system used by a front-end
visualization package.

\vspace{0.4cm}
\noindent {\bf Contributions and Outline ---}  The central
contributions of this work include:
\begin{itemize}
  \item We propose a concrete user model of how scatterplot visualizations are
        used in practice.  (Section~\ref{sec:related}.)
  \item We describe an optimization problem that yields high-quality
        samples for our user model (Section~\ref{sec:formulation}).
        We further offer a an efficient approximate method for solving
        the optimization problem (Section~\ref{sec:solving}).
  \item We give an implmented sampling method that yields X improvement over
        random sampling, and Y improvement over stratified sampling.
        We also give user studies that demonstrate our system's
        visualizations are preferred over those from other methods.
        (Section~\ref{sec:exp}) 
\end{itemize}
}


\section{System Overview}
\label{sec:vizsystems}

\subsection{Software Architecture Model}

\begin{figure}[t]
  \begin{center}
    \includegraphics[height=1in]{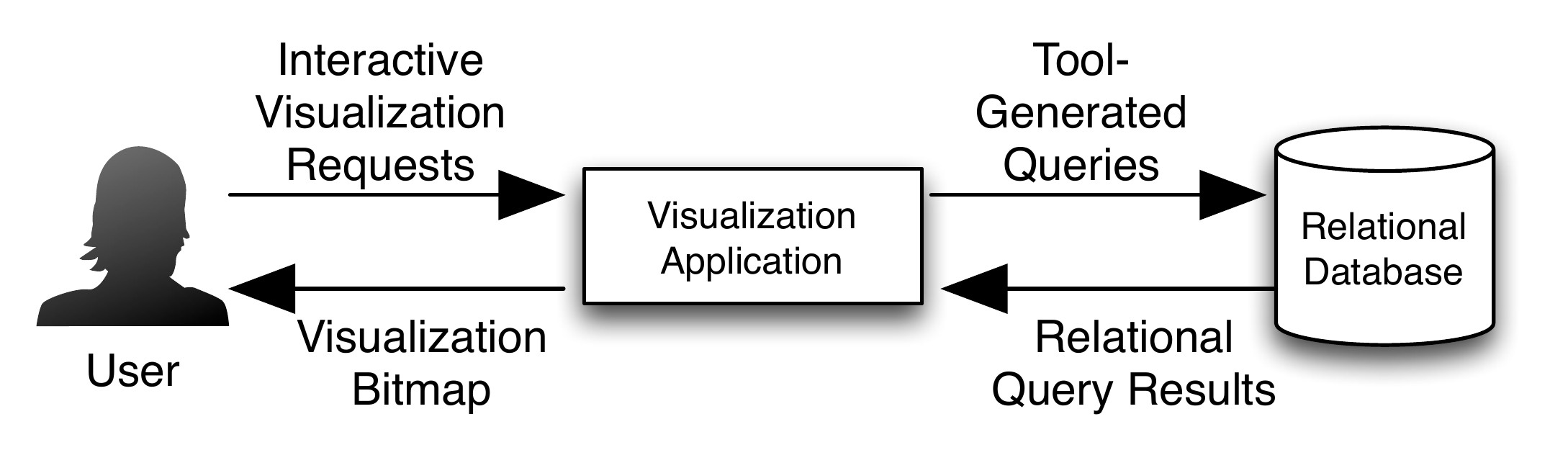}
  \end{center}    
    \caption{Standard model of user interaction with the combined
    visualization and database system.}
    \label{fig:architecture}
\end{figure}

Figure~\ref{fig:architecture} shows the  
software architecture model that we focus on in this paper. 
This is a standard architecture supported by the popular Tableau
system~\cite{tableauarchi}. 
 It is also similar to
ScalaR's ``dynamic reduction'' software architecture~\cite{battle:2013}.  The user interacts with a
visualization tool to describe a desired visualization --- say, a scatterplot of
{\sf\small Web server time} vs {\sf\small latency}.  This tool has been configured to
access a dedicated RDBMS, and the schema information from the RDBMS is
visible in the user's interaction with the tool.  For example, in
Tableau, a user can partially specify a desired scatterplot by
indicating a column name (say, {\sf\small server latency}) from a list of
options populated from the RDBMS metadata.  The user must not only choose
just which fields and ranges from the database are rendered, but also choose
image-specific parameters such as visualization type, axis labels,
color codings, and so on.   

Once the user has fully specified a visualization, the
tool requests the necessary data by generating the appropriate SQL query and submitting it to
 the remote RDBMS.  The RDBMS then returns the (relational) query
results back to the visualization tool.  Finally, the tool uses the fetched records
to render the final visualization bitmap, which is displayed to
the user.  During these last three steps, the user waits idly for the
visualization tool and the RDBMS.

When large datasets are being visualized, extremely long waits can negatively affect the analyst's level of engagement and ability to interactively produce successive visualizations
\cite{heer:2012, barnett:2013,miller1968response, shneiderman1984response, liueffects}. 
 As reported in
Figure \ref{fig:intro_latency}, our own experiments show that the
industry-standard Tableau tool can take more than four minutes to
produce a scatterplot on just 50M tuples fetched from an in-memory database.

Note that our sampling approach is not limited to the 
 software architecture in Figure~\ref{fig:architecture},  as 
 reducing the number of visualized records  almost always brings performance benefits.
Thus,  
 even if engineers decide to combine the visualization and
data management layers  in the future, sampling-based methods will still be
useful.

\subsection{Data Sampling}
\label{sec:datasampling}

Approximate query processing via sampling is a popular technique 
\cite{mozafari_cidr2015,mozafari_pvldb2012,acharya1999aqua,babcock2003dynamic,chaudhuri2007optimized,hellerstein1997online,jermaine2008scalable,olston2009interactive}
for reducing the number of returned records, and random sampling or stratified sampling are two well-known methods for this.
  When
using these methods, the visualization tool's query is run over a
{\em sampled table(s)} (or simply, a \emph{sample}) that is smaller than, and derived from, the
original table(s).  The sample(s) can be maintained by the same
RDBMS.  Since sampling approaches incur
an additional  overhead to produce the sample(s), 
these are typically performed in an \emph{offline} manner \cite{mozafari_eurosys2013,chaudhuri2007optimized}: 
Once the sample is created and stored in the database, they can be interactively queried and visualized many times. 
(A sample can also be periodically updated when  new data arrives \cite{mozafari_pvldb2012}.)

There is, of course, a tradeoff between output result quality and the
size of the sample\footnote{Here, the size of a sample means the
number of the data points contained in the sample.} (and thus, runtime).  In the limit, a random sample of 100\% of the original
database will produce results with perfect fidelity, but will also not
yield any reduction in runtime.  Conversely, a sample of 0\% of the
database will yield a result with no fidelity, albeit very quickly.
The exact size of the sample budget will be determined by
deployment-specific details: the nature of the application, the
patience of the user base, the amount of hardware resources available,
and so on.  As a result, the usefulness of a sampling method must be
evaluated over a range of sample budgets, with a realistic measure of
final output quality.  Choosing a correct sampling budget is a known
issue in the approximate query processing literature~\cite{mozafari_eurosys2013}.  Of course, in any specific real-world
deployment, we expect that the application will have a fixed maximum
runtime or the size of a sample that the system must observe.

\subsection{Visualization Quality}
\label{sec:viztasks}

In this work, we focus on the production of
{\em scatterplots} (including map plots, such as Figure~\ref{fig:intro}) as one of the most popular 
visualization techniques. We leave other  visualization types (such as bar charts,
line charts, chloropleths, and so on) to future work.

Since the final value of a visualization is how much it helps the user,
evaluating any sampling method means examining
how users actually employ the visualizations they produce.
Schneiderman, \etal\ proposed a taxonomy for information visualization
types~\cite{shneiderman1996eyes} and compiled some of  common
visualization-driven goals/tasks.  Their  list of goals included (i) regression,
(ii) density estimation, and (iii) clustering.  Our system aims to yield
visualizations that help with each of these  popular goals.
We make no claim about other user goals and tasks for
visualization, such as pattern finding and outlier detection, which we
reserve for future work (although we have anecdotal evidence to
suggest our system can address some of these tasks, too).
 
In this context, regression  is the task of (visually) estimating the value of dependent variables given the
value of independent variables. For example, if we want to know the temperature
of the location specified by a pair of latitude and longitude coordinates,
it belongs to the
regression task. Density estimation is the task of understanding the distribution
of the original data. For instance, one can use a map plot to understand the
geometric area with the most cell phone subscribers. Clustering is
a task that assigns data elements into distinct sets such that the data in
the same group tend to be close to one another, while data in
different groups are comparatively far apart.

Schneiderman, \etal's list also included goals/tasks that are either poor fits for scatter
 plots, or are simply outside the scope of what we aim to
accomplish in this paper: shape visualization (DNA or 3D structures), classification,
hierarchy understanding, and community detection in networks.  We
explicitly do {\em not} attempt to produce visualizations that can
help users with these tasks.

\subsection{Our Approach}
Our proposed method proceeds in two steps: (1) during {\em offline
preprocessing}, we produce a sample that enable fast
queries later, and (2) at {\em query time}, we choose a sample whose size is appropriate for the specific query.

Similar to any offline indexing technique, \vas also requires (1) the user to make
  choices about
indexed columns and (2) an upfront computational work to speed up future queries. 
In other words,  \vas can be considered as a specialized index designed for visualization workloads (\eg,
Tableau). Note that, currently, even if users want to use offline indexing, there is no
indexing technique that ensures fast and accurate visualizations, a problem
solved by \vas.

Indexed columns can be chosen in three ways:
\begin{enumerate}[nolistsep,noitemsep=2pt]
  \item manually, by the DBA;
  \item based on the most frequently visualized columns~\cite{mozafari_eurosys2013,idreos2007database}; or
  \item based on statistical properties of the data~\cite{parameswaran2013seedb}.
\end{enumerate}
Among these approaches, the second one is the simplest, works reasonably well in practice, 
and can be made resilient against workload
changes~\cite{mozafari_sigmod2015}. 
Furthermore, note that visualization workloads, especially
those supported by BI tools and SQL engines, are similar to exploratory SQL
analytics (i.e., grouping, filtering, aggregations).
Real-world traces from
Facebook and Conviva~\cite{mozafari_eurosys2013} reveal that 80-90\% of
exploratory queries use 5-10\% of the column combinations. Moreover, 
\vas only requires
frequently visualized column pairs, not groups or filters.

The core innovation of our work is that we generate a sample according to
a {\em visualization-specific metric}.  That is, we believe that when
a sample is generated according to the metric we propose in
Section~\ref{sec:formulation} below, the user will be able to
accomplish their goals from Section~\ref{sec:viztasks} above (i.e., regression, density estimation, clustering)
using the resulting visualization, 
even  with a small number of rendered data points.
  We do {\em not} claim that our
method will work for other visualization types, or even for other visualization goals.  
Indeed, our method of generating a sample could in principle be harmful
to some goals (such as community detection tasks that require
all members of a latent set to be sampled).
However, scatter plots, map plots, and the three visualization goals we focus on are quite widespread
and useful.  
Furthermore, modifying a visualization tool to only use our sampling method 
if a user declares an interest in one of these goals 
would be a straightforward task.




\section{Problem Formulation}
\label{sec:formulation}

The first step in our approach to obtain a good sample for scatter plot
visualizations is  defining a mathematical loss function that is closely
correlated with the loss of  visualization quality/utility from the user's
perspective.  Once we define a function that captures the visualization utility,
our goal will be to solve an optimization problem; that is, finding a  sample of
a given size that has the minimum value for the loss function.  In the rest of
this section, we formally define our loss function and optimization problem.
The algorithm for solving the derived optimization problem will be presented in
Section \ref{sec:solving}.  Also, in Section \ref{sec:user}, we  will report
a comprehensive user study confirming that minimizing our loss function does
indeed yield  more useful visualizations for various user tasks.

We start our problem formulation by defining notations.  We denote a dataset $D$
of $N$ tuples by $D = \{t_1, t_2, \ldots, t_N\}$.  Each tuple $t_i$
encodes the coordinate at which the associated \emph{point} is displayed.
For example, $t_i$ is a pair of longitude and latitude in a map plot.  A
sample $S$ is a subset of the dataset $D$ and is denoted by $S =  \{ s_1,
s_2, \ldots, s_K \}$. Naturally, $s_i$ is one of $t_j$ where $j = 1, \ldots, N$.
The size of the sample $S$ (which is denoted by $K$) is pre-determined based on the
interactive latency requirement (see Section~\ref{sec:datasampling}) and is
given as an input to our problem.

In designing our loss function, the objective is to measure the
visualization quality degradation originating from the sampling process.
The traditional goal of sampling in  database systems is to
maximize the number of tuples that match a selection predicate, particularly those
on categorical attributes~\cite{mozafari_eurosys2013}.
In contrast, the selection predicates
of a scatter/map plot are on a continuous range, for which
traditional approaches (e.g., uniform or stratified
sampling)
may not lead to high quality visualizations.

Therefore, to develop a more visualization-focused sampling technique, we first
imagine a 2D space on which a scatter plot is displayed, and let $x$ denote any
of the points on the space. To measure the visualization quality loss, we make
the following observations:
\begin{enumerate}
\item The visualization quality loss occurs, as the sample $S$ does not include all tuples of $D$.
\item The quality loss at $x$ is reduced as
    the sample includes points at or near $x$ --- two plots drawn using the
    original dataset ($D$) and the sample ($S$) might not look identical
    if $S$ does not include a point at $x$ whereas $D$ includes one at $x$,
    but they will look similar if the sample contains points near $x$.
\item When there are already many points at or near x, choosing more points in
  that neighborhood does not significantly enhance a visualization.
 \end{enumerate}

To express the above observations in a formal way, we consider the following
measure for visualization quality degradation at the point $x$:
\[
\pointloss(x) = \frac{1}{\sum_{s_i \in S} \kappa(x, s_i)}.
\]
where $\kappa(x, s_i)$ is the function that captures the \emph{proximity}
between the two points, $x$ and $s_i$.
In this paper, we use $\kappa(x, s_i) = \exp(-\|x - s_i\|^2 / \epsilon^2)$ (see
footnote\footnote{In our experiments, we set $\epsilon \approx
\max(\|x_i-x_j\|)/100$ but there is a theory on how to choose the optimal value
for $\epsilon$  as the only unknown parameter \cite{cizek2005statistical}.} for
$\epsilon$) although other functions can also be used for $\kappa$ if the function
is a decreasing convex function of $\|x - s_i\|$ --- the convexity is needed due
to the third observation we described above. The equivalent quality metric
can also be obtained by considering the problem as the regression problem that aims to
approximate the original tuples in $D$ using the sample $S$. See our technical
report for an alternative derivation~\cite{vas2015arxiv}. Note that the above
loss value is reduced if there exists more sampled points near $x$ where the
proximity to $x$ is captured by $\kappa(x, s_i)$. In other words, the
visualization quality loss at $x$ is minimized if $S$ includes as many points as
possible at and around $x$.

Note that the above loss function is defined for a single point $x$ on the space
on which a scatter/map plot is visualized, whereas the space on which a scatter plot
is drawn has many of those points. As a result, our goal should be to obtain a
sample $S$ that minimizes the \emph{combined} loss of all possible points on the
space. Due to the reason, our objective is to find a sample $S$ that minimizes the
following expression:
\begin{equation}
    \loss(S) = \int \pointloss(x) \;dx = 
    \int \frac{1}{\sum_{s_i \in S} \kappa(x, s_i)} \;dx
    \label{eq:loss}
\end{equation}
Here, the integration is performed over the entire 2D space.

Now, we perform several mathematical tricks to obtain an effectively equivalent
but a more tractable problem, because the exact computation of the above integration
requires an computationally expensive method such as a Monte Carlo experiment
with a large number of points.
Using a second-order Taylor expansion, we can obtain a simpler form that enables
an efficient algorithm in the next section:
\begin{align*}
&\min \int \frac{1}{\sum_{s_i \in S} \kappa(x, s_i)} \; dx \\
&= \min \int 1 - (\sum \kappa(x,s_i) - 1) + (\sum \kappa(x,s_i) - 1)^2 \; dx \\
&= \min \int (\sum \kappa(x,s_i))^2 - 3 \sum \kappa(x,s_i) \; dx \\
&= \min \int \sum_{s_i, s_j \in S} \kappa(x,s_i) \kappa(x,s_j) \; dx
\end{align*}

To obtain the last expression, we used the fact that the term $\int \sum
\kappa(x,s_i)\;dx$ is constant since $\kappa(x,s_i)$ is a similarity function
and we are integrating over every possible $x$, i.e., $\int \sum \kappa(x,s_i)\;dx$
has the same value regardless of the value of $s_i$.
For the same reason, $\int \sum [\kappa(x,s_i)]^2\;dx$ is also constant. By
changing the order of integration and summation, we obtain the following
optimization formulation, which we refer to as Visualization-Aware Sampling
(\vas) problem in this paper.

\begin{mydef}[\vas]
Given a fixed $K$, \vas~is the problem of obtaining a sample $S$ of size $K$
 as a solution to the following optimization problem:
\begin{align*}
&\min_{S \subseteq D;\; |S|=K} \sum_{s_i, s_j \in S;\; i < j}
\tilde{\kappa}(s_i,s_j) \\
&\text{where } \tilde{\kappa}(s_i, s_j) = \int \kappa(x,s_i) \kappa(x,s_j) dx
\end{align*}
\label{prob:vas}
\end{mydef}

In the definition above, we call the summation term $\sum
\tilde{\kappa}(s_i,s_j)$ the \emph{optimization objective}.
With our choice of the proximity function, $\kappa(s_i, s_j) =
\exp(-\|s_i-s_j\|^2/\epsilon^2)$,
we can obtain a concrete expression for $\tilde{\kappa}(s_i, s_j)$:
$\exp(-\|s_i - s_j\|^2/2\epsilon^2)$, after eliminating constant terms that do not
affect the minimization problem. In other words, $\tilde{\kappa}(s_i,s_j)$ is
in the same form as the original proximity function.
In general, $\tilde{\kappa}(s_i,s_j)$ is another proximity function between the
two points $s_i$ and $s_j$ since the integration for
$\tilde{\kappa}(s_i,s_j)$ tends to have a larger value when the two points are
close. Thus, in practice, it is sufficient to use any proximity function
directly in place of $\tilde{\kappa}(s_i,s_j)$. 

\delete{
Finally, note that our \vas\ problem has to be solved separately for every
independent column variable of interest.
In other words, similarly to stratified sampling
\cite{chaudhuri2007optimized,agarwal2013blinkdb}, users must declare the set of
columns that they want to plot \emph{a priori}, so that the \vas\
can be solved for each such column in an \emph{offline} manner.
While this may seem to limit the system's usability,  it still covers a wide
variety of real-world data analytics scenarios because certain columns tend to be
used much more frequently.  For example, \emph{time} tends to a popular choice 
in time series data while a pair of \emph{latitude} and \emph{longitude} will
likely be a popular choice in a map plot. In fact, Agarwal,
\etal~\cite{agarwal2013blinkdb} report that 90\% of the queries in Conviva and
Facebook can be answered using just 10\% or 20\% of the columns, respectively.}

In the next section, we show that the \vas\ problem defined above is NP-hard and we 
present an efficient approximation algorithm for solving this problem. 
Later in Section \ref{sec:user}
we show that by finding a sample $S$ that minimizes our loss function, we obtain a
sample that, when visualized, best allows users to perform various visualization tasks.



\section{Solving VAS}
\label{sec:solving}

In this section, we focus on solving  the optimization problem derived in the previous
section to obtain an optimal sample $S$. In the following section, we also
describe how to extend the sample obtained by solving VAS to provide a richer set
of information.

\subsection{Hardness of VAS}

First, we analyze the hardness of \vas formally.

\begin{theorem}
\vas\ (Problem~\ref{prob:vas}) is NP-hard.
\end{theorem}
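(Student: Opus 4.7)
The plan is to view \vas\ abstractly as a graph problem and reduce a classical NP-hard problem to it. Observe that, once the proximity matrix $\tilde\kappa(s_i,s_j)$ has been tabulated, \vas\ is exactly the following combinatorial problem: given a complete edge-weighted graph on $N$ vertices with non-negative weights $w_{ij}=\tilde\kappa(t_i,t_j)$, select a $K$-vertex subset $S$ minimizing $\sum_{i<j\in S} w_{ij}$. This is the \emph{Minimum Edge-Weight $K$-Subgraph} problem (the minimization dual of the Densest $K$-Subgraph / Max Dispersion problem), so it suffices to exhibit an NP-hard special case of this formulation that is realizable by an actual geometric \vas\ instance.

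I would reduce from \textsc{Independent Set} on Unit Disk Graphs, which is NP-hard and, crucially, already comes with a concrete embedding of the vertices as points $p_1,\ldots,p_n\in\mathbb{R}^2$ such that $\{p_u,p_v\}$ is an edge iff $\|p_u-p_v\|\le 1$. The reduction takes an instance $(G,K)$ of UDG-\textsc{Independent Set} and builds a \vas\ instance on the same $n$ points with the Gaussian proximity $\tilde\kappa(s_i,s_j)=\exp(-\|s_i-s_j\|^2/2\epsilon^2)$ used in the paper, scaling the coordinates so that any two non-adjacent points are at distance at least $1+\delta$ for a fixed $\delta>0$ (achievable because UDG embeddings can be re-scaled without changing adjacency). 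I then pick $\epsilon$ small enough that edge pairs contribute weight at least $\alpha$ while non-edge pairs contribute at most $\beta$, with $\binom{K}{2}\beta < \alpha$. With this separation, the optimal \vas\ objective of size $K$ is $<\alpha$ iff $S$ contains no edge of $G$, i.e., iff $S$ is an independent set of size $K$ in $G$.

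The main obstacle, and the only nontrivial part of the argument, is the last quantitative step: justifying that a single polynomially small $\epsilon$ (so that the construction is polynomial-time and the weights $\tilde\kappa$ have polynomially bounded encoding) simultaneously makes edge weights exceed the total non-edge mass. I would handle this by bounding the maximum pairwise distance $R$ in the embedding by $\mathrm{poly}(n)$, then choosing $\epsilon^2=\Theta(1/\log n)$ so that $\alpha=\exp(-1/(2\epsilon^2))$ is polynomially large in comparison to $\binom{n}{2}\exp(-(1+\delta)^2/(2\epsilon^2))$; a routine calculation with the ratio $\alpha/\beta = \exp\!\bigl((2\delta+\delta^2)/(2\epsilon^2)\bigr)$ shows the required gap can be forced. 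The correspondence between optimal \vas\ solutions and maximum independent sets then follows immediately, establishing NP-hardness of \vas.
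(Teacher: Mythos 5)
Your proof is correct in outline but takes a genuinely different route from the paper's. The paper reduces \emph{maximum edge subgraph} (densest $K$-subgraph) to \vas\ by setting $\tilde\kappa(x_i,x_j)=w_{max}-w(v_i,v_j)$, which requires treating $\tilde\kappa$ as a matrix of \emph{arbitrary} non-negative values; the paper itself concedes (in its appendix) that this leaves open whether the hardness survives when $\tilde\kappa$ must arise geometrically from actual points under the Gaussian kernel, and it falls back on empirical MIP evidence for that case. Your reduction from \textsc{Independent Set} on unit disk graphs attacks exactly that gap: because the UDG hardness construction hands you an explicit planar embedding, you produce a bona fide geometric \vas\ instance with the paper's own kernel, and the threshold argument ($\binom{K}{2}\beta<\alpha$) correctly shows that the size-$K$ set of minimum objective is an independent set whenever one exists. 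So your argument buys a strictly stronger conclusion at the price of more quantitative work, while the paper's is a two-line reduction for the abstract combinatorial problem. Two caveats you should tighten. First, rescaling a UDG embedding does \emph{not} by itself create a constant gap $\delta$ between edge distances ($\le 1$) and non-edge distances ($>1$), since scaling multiplies all distances uniformly; you must instead invoke the specific grid-based NP-hardness construction for UDG independent set (Clark--Colbourn--Johnson style), whose point sets do enjoy such a separation by design. Second, the weights $\exp(-\|s_i-s_j\|^2/2\epsilon^2)$ are transcendental, so you need a sentence arguing that truncating them to polynomially many bits preserves the $\binom{K}{2}\beta<\alpha$ separation --- this follows from the same gap estimate, but it is part of making the reduction genuinely polynomial-time and should be stated.
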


\begin{proof}
We show the NP-hardness of Problem~\ref{prob:vas} by reducing
\emph{maximum edge subgraph} problem to \vas.

\begin{lemma}
(\emph{Maximum Edge Subgraph}) Given a undirected weighted graph $G = (V, E)$,
choose a subgraph $G' = (V', E')$ with $|V'| = K$ that maximizes
\[
    \sum_{(u,v) \subset E'} w(u,v)
\]
This problem is called \emph{maximum edge subgraph}, and is
NP-hard~\cite{feige2001dense}.
\end{lemma}

To reduce the above problem to \vas, the following procedure is performed: map
$i$-th vertex $v_i$ to $i$-th instance $x_i$, and set the value of
$\tilde{\kappa}(x_i,x_j)$ to $w_{max} - w(v_i, v_j)$, where $w_{max} =
\max_{v_i, v_j \subset V'} w(v_i, v_j)$. The reduction process takes $O(|E| +
|V|)$. Once the set of data points that minimize $\sum_{s_i,s_j \in X}
\tilde{\kappa}(s_i,s_j)$ is obtained by solving \vas, we choose a set of
corresponding vertices, and return them as an answer to the \emph{maximum edge
subgraph} problem.  Since the \emph{maximum edge subgraph} problem is NP-hard,
and the reduction process takes a polynomial time, \vas~is also NP-hard.
\end{proof}

Due to the NP-hardness of \vas, obtaining an exact solution to \vas\ is
prohibitively slow,  as we will empirically show in Section~\ref{sec:exact}.
Thus, in the rest of this section, we present an approximation algorithm for
\vas\ (Section~\ref{sec:interchange}), followed by additional ideas for improvement
(Section~\ref{sec:speedup}).

\subsection{The Interchange Algorithm}
\label{sec:interchange}
\label{sec:speedup}


In this section, we present our approximation algorithm, called \interchange.
The \interchange algorithm starts from a randomly chosen set of size $K$ and
performs a replacement operation with a new data point if the operation
decreases the optimization objective (i.e., the loss function).  We call such a replacement, i.e., one
that decreases the optimization objective, a \emph{valid replacement}. In other
words,
\interchange tests for valid replacements as it sequentially reads through the
data points from the dataset $D$.

One way to understand this algorithm theoretically is by imagining a
Markov network in which each state represents a different subset of $D$ where
the size of the subset is $K$. The network then has a total of $\binom{D}{K}$
states. The transition between the states is defined as an exchange of one of
the elements in the current subset $S$ with another element in $D-S$. It is
easy to see that the transition defined in this way is \emph{irreducible},
i.e., any state can reach any other states following the transitions defined in
such a way. Because \interchange is a process that continuously
seeks a state with a lower optimization objective than the current one,
\interchange is a hill climbing  algorithm in the network.

\begin{algorithm}[t]
  \small
    \DontPrintSemicolon
    \SetKwInOut{Input}{input}
    \SetKwInOut{Output}{output}
    \SetKwFunction{Dist}{dist}
    \SetKwFunction{Expand}{Expand}
    \SetKwFunction{Shrink}{Shrink}
    \SetKwData{rsp}{rsp}
    \SetKwBlock{Proc}{}{end}
    \SetKw{Subroutine}{subroutine}

    \Input{$D = \{t_1, t_2, \ldots, t_N\}$}
    \Output{A sample $S$ of size $K$}
    \BlankLine
    \tcp{set for pairs of (item, responsibility)}
    $R$ $\leftarrow$ $\varnothing$ \;
    \ForEach{$t_i \in D$}{
        \lIf{$|R| < K$}{ $R \leftarrow$ \Expand($R, t_i$) }
        \Else{
            $R \leftarrow$ \Expand($R, t_i$) \;
            $R \leftarrow$ \Shrink($R$) \;
        }
    }
    $S \leftarrow$ pick the first item of every pair in $R$\;
    \Return $S$\;

    \BlankLine
    \Subroutine \Expand($R, t$)
    \Proc{
        \rsp $\leftarrow 0$ \qquad \tcp{responsibility}
        \ForEach{$(s_i, r_i) \in R$}{
            $l$ $\leftarrow \tilde{\kappa}(t, s_i)$\;
            $r_i$ $\leftarrow$ $r_i + l$\;
            \rsp $\leftarrow \rsp + l$\;
        }
        insert $(t, \rsp)$ into $R$\;
        \Return $R$\;
    }

    \BlankLine
    \Subroutine \Shrink($R$)
    \Proc{
        remove $(t, r)$ with largest $r$ from $R$\;
        \ForEach{$(s_i, r_i) \in R$}{
            $r_i$ $\leftarrow$ $r_i - \tilde{\kappa}(t, s_i)$\;
        }
        \Return $R$\;
    }

    \caption{Interchange algorithm.}
    \label{algo:interchange}
\end{algorithm}


\ph{Expand/Shrink procedure}
Now we state how we can efficiently perform valid replacements.  One
approach to finding valid replacements is
by substituting one of the elements in $S$ with a new data point whenever 
one is
read in, then computing the optimization objective of the set.
For this computation, we need to
call the proximity function $O(K^2)$ times as there are $K$ elements in the
set, and we need to compute a proximity function for every pair of elements in
the set. This computation should be done for every element in $S$. Thus, to test
for valid replacements, we need $O(K^3)$ computations for every new data point.

A more efficient approach is to consider only the part of the optimization
objective for which the participating elements are \emph{responsible}. We
formally define the notion of responsibility as follows.
\begin{mydef}
(\emph{Responsibility}) The responsibility of an element $s_i$ in set $S$ is
defined as:
\[
\text{rsp}_S (s_i) = \frac{1}{2} \sum_{s_j \in S,\,j \ne i}
\tilde{\kappa}(s_i,s_j).
\]
\end{mydef}

Using the responsibility, we can speed up the tests
for valid replacements in the following way. Whenever considering a new data
point $t$, take an existing element $s_i$ in $S$, and compute the responsibility
of $t$ in the set $S - \{s_i\} + \{t\}$. This computation takes $O(K)$ times. It
is easy to see that if the responsibility of $t$ in $S - \{s_i\} + \{t\}$
is smaller than the responsibility of $s_i$ in the original set
$S$, the replacement operation of $s_i$ with the new data point $t$ is a \emph{valid
replacement}.  In this way, we can compare the responsibilities without
computing all pairwise proximity functions. Since this test should be performed for
every element in $S$, it takes a total of $O(K^2)$ computations for every new
data point.

However, it is possible to make this operation even faster. Instead of testing
for valid replacements by substituting the new data point $t$ for one of the elements in
$S$, we simply \emph{expand} the set $S$ by inserting $t$ into the set,
temporarily creating a set of size $K+1$. In this process, the responsibility of
every element in $S$ is updated accordingly. Next, we find the element with the
largest responsibility in the expanded set and remove that element from the set,
shrinking the set size back to $K$. Again, the responsibility of
every element in $S$ should be updated. Algorithm~\ref{algo:interchange} shows
the pseudo-code for this approach. The theorem below proves the correctness of
the approach.

\begin{theorem}
For  $s_i \in S$, if replacing $s_i$
with a new element $t$ reduces the optimization objective of $S$, applying 
{\tt\small Expand} followed by {\tt\small Shrink} in Algorithm~\ref{algo:interchange}
replaces $s_i$ with $t$. Otherwise, $S$ remains the same.
\end{theorem}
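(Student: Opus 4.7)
The plan is to reduce everything to a single identity relating the objective $f(X) = \sum_{u,v \in X,\, u<v} \tilde{\kappa}(u,v)$ to the responsibility of the element being removed, and then to check that \texttt{Expand} and \texttt{Shrink} faithfully maintain these responsibilities as bookkeeping.

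First I would establish the key identity. Let $S$ be the current sample and set $S' = S \cup \{t\}$, the temporary set of size $K{+}1$ produced by \texttt{Expand}. Then for every $x \in S'$,
\[
f(S' \setminus \{x\}) \;=\; f(S') \;-\; 2\,\mathrm{rsp}_{S'}(x),
\]
because removing $x$ deletes exactly the pairs incident to $x$, whose total weight equals $2\,\mathrm{rsp}_{S'}(x)$ by the definition of responsibility. Consequently, $\arg\min_{x \in S'} f(S' \setminus \{x\}) = \arg\max_{x \in S'} \mathrm{rsp}_{S'}(x)$.

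Next I would verify, by a direct induction on the loop inside \texttt{Expand}, that after \texttt{Expand}$(R, t)$ the value $r_i$ stored for each $s_i \in S'$ equals $\sum_{s_j \in S',\, j \neq i} \tilde{\kappa}(s_i, s_j)$, i.e., the unscaled responsibility of $s_i$ in $S'$ (the constant factor $2$ is irrelevant to the $\arg\max$). In particular, for old items this is the previous responsibility augmented by $\tilde{\kappa}(t, s_i)$, and for $t$ itself it is the sum $\sum_{s_i \in S} \tilde{\kappa}(t, s_i)$, which is exactly what the pseudocode accumulates. Hence \texttt{Shrink} correctly removes $x^{*} := \arg\max_{x \in S'} \mathrm{rsp}_{S'}(x)$, i.e., the element that minimizes $f(S' \setminus \{x\})$.

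The theorem then follows by case analysis. If $x^{*} = t$, the algorithm leaves $S$ unchanged, and for every $s_i \in S$ we have $f(S) = f(S' \setminus \{t\}) \le f(S' \setminus \{s_i\}) = f((S \setminus \{s_i\}) \cup \{t\})$, so \emph{no} swap reduces the objective --- contrapositively, if some swap of $s_i$ for $t$ strictly decreases the objective, then $x^{*} \neq t$ and \texttt{Shrink} removes some $s_j \in S$, giving $f((S \setminus \{s_j\}) \cup \{t\}) \le f(S)$ with $s_j$ the best such choice (ties broken arbitrarily). The main obstacle, such as it is, lies in making sure the incremental updates in \texttt{Expand}/\texttt{Shrink} really yield responsibilities in the \emph{updated} set rather than the old one; once the identity in Step~1 is in hand this is pure bookkeeping, which is why the algorithm can avoid the naive $O(K^3)$ recomputation and still pick the globally best swap in $O(K)$ work per new tuple.
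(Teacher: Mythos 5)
Your proof is correct and follows essentially the same route as the paper's: the paper's chain of equivalences $\tilde{\kappa}(S_-+\{s_i\}) > \tilde{\kappa}(S_-+\{t\}) \iff \rspn_{S_+}(s_i) > \rspn_{S_+}(t)$ is exactly your identity $f(S'\setminus\{x\}) = f(S') - 2\,\rspn_{S'}(x)$ specialized to $x \in \{s_i, t\}$, and both arguments conclude that \texttt{Shrink}'s removal of the maximum-responsibility element picks the objective-minimizing swap (or leaves $S$ unchanged when $t$ itself is the maximizer). Your additional check that \texttt{Expand}'s incremental updates actually maintain the responsibilities in the enlarged set is sensible bookkeeping that the paper leaves implicit, but it does not change the substance of the argument.
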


\begin{proof}
Let $\tilde{\kappa}(S)$ indicate $\sum_{s_i,s_j \in S, i < j}
\tilde{\kappa} (s_i,s_j)$.  Also, define $S_- = S - \{s_i\}$ and $S_+ = S +
\{t\}$. We show that if the optimization objective before the replacement,
namely $\tilde{\kappa}(S_- + \{s_i\})$, is larger than the optimization
objective after the replacement, namely $\tilde{\kappa}(S_- + \{t\})$, then the
responsibility of the existing element $s_i$ in an expanded set,
$\rspn_{S_+}(s_i)$, is also larger than the responsibility of the new element
$t$ in the expanded set, $\rspn_{S_+}(t)$. The proof is as follows:
\begin{align*}
&\tilde{\kappa}(S_- + \{s_i\}) > \tilde{\kappa} (S_- + \{t\}) \\
\iff &\sum_{s_j \in S_-} \tilde{\kappa}(s_i, s_j)
> \sum_{s_j \in S_-} \tilde{\kappa}(t, s_j) \\
\iff &\tilde{\kappa}(s_i,t) + \sum_{s_j \in S_-} \tilde{\kappa}(s_i, s_j)
     > \tilde{\kappa}(s_i,t) + \sum_{s_j \in S_-} \tilde{\kappa}(t, s_j) \\
\iff &\rspn_{S_+}(s_i) > \rspn_{S_+}(t).
\end{align*}
Since the responsibility of $s_i$ is larger than that of $t$ in the expanded set
$S_+$, the {\tt\small Shrink} routine will remove $s_i$. If no element 
exists whose responsibility is larger than that of $t$, then $t$ is removed
by this routine and $S$ remains the same.
\end{proof}

In both  the {\tt\small Expand} and {\tt\small Shrink} routines, the responsibility of each
element is updated using a single loop, so both routines take $O(K)$
computations whenever a new data point is considered. Thus, scanning the entire dataset and applying these
two routines will take $O(NK)$ running time.

The \interchange algorithm, if it runs until no replacement decreases the
optimization objective, has the following theoretical bound.

\begin{theorem}
Let's say that the sample obtained by \interchange is $S_{int}$, and the optimal
sample is $S_{opt}$. The quality of $S_{int}$, or the optimization
objective, has the following upper bound:
\begin{align*}
&\frac{1}{K(K-1)} \sum_{s_i, s_j \in S_{int};\; i<j} \tilde{\kappa}(s_i,s_j) \\
&\le \frac{1}{4}
+ \frac{1}{K(K-1)} \sum_{s_i, s_j \in S_{opt};\; i<j} \tilde{\kappa}(s_i,s_j)
\end{align*}
In the expression above, we compare the difference between the averaged
optimization objectives.
\end{theorem}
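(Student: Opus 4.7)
The plan is to exploit the local-optimality of $S_{int}$. Since \interchange\ terminates only when no 1-swap can further decrease the objective, for every $s \in S_{int}$ and every $t \in D \setminus S_{int}$ we have the swap inequality
\begin{equation*}
\sum_{u \in S_{int} \setminus \{s\}} \tilde{\kappa}(s, u) \;\leq\; \sum_{u \in S_{int} \setminus \{s\}} \tilde{\kappa}(t, u).
\end{equation*}
All the work lies in summing these inequalities over a carefully chosen family of pairs $(s,t)$ so as to bound the gap $\Delta := f(S_{int}) - f(S_{opt})$, where $f(S) := \sum_{i<j} \tilde{\kappa}(s_i, s_j)$.

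First I would partition the two samples. Let $C = S_{int} \cap S_{opt}$, $A = S_{int} \setminus S_{opt}$, $B = S_{opt} \setminus S_{int}$, and $m = |A| = |B|$. Introducing the shorthand $W(X,Y) := \sum_{x \in X,\, y \in Y,\, x \neq y} \tilde{\kappa}(x,y)$, the common-core terms cancel and
\begin{equation*}
\Delta \;=\; f(A) - f(B) + W(A,C) - W(B,C).
\end{equation*}

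Second, I would sum the swap inequality over all $m^2$ pairs in $A \times B$. The left side expands to $m\bigl(2 f(A) + W(A,C)\bigr)$; the right side, after using $\sum_{u \in S_{int}\setminus\{s\}}\tilde{\kappa}(t,u) = g_{S_{int}}(t) - \tilde{\kappa}(t,s)$ and telescoping, becomes $(m-1)\,W(A,B) + m\,W(B,C)$. This yields
\begin{equation*}
2 f(A) + W(A,C) - W(B,C) \;\leq\; \tfrac{m-1}{m}\, W(A,B).
\end{equation*}
Substituting into the decomposition of $\Delta$ and using $f(A), f(B) \geq 0$ gives
\begin{equation*}
\Delta \;\leq\; -f(A) - f(B) + \tfrac{m-1}{m}\, W(A,B).
\end{equation*}

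The hardest step is turning this into the claimed $K(K-1)/4$. A crude use of $\tilde{\kappa} \leq 1$ only yields $W(A,B) \leq m^2$, hence $\Delta \leq m(m-1) \leq K(K-1)$, which is four times too loose. To recover the factor $1/4$, the plan is to exploit the fact that $\tilde{\kappa}$ is a Mercer kernel, since by construction $\tilde{\kappa}(s_i,s_j) = \langle \kappa(\cdot,s_i), \kappa(\cdot,s_j) \rangle_{L^2}$; Cauchy--Schwarz in the induced RKHS then gives $W(A,B) \leq \sqrt{(2 f(A) + m)(2 f(B) + m)}$, and an AM--GM step lets the negative terms $-f(A) - f(B)$ absorb enough of the cross-term to collapse the bound down to $K(K-1)/4$. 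The main obstacle I anticipate is the arithmetic balancing in this last step: the bound is tightest when $A$ and $B$ are each maximally clustered, so that $f(A)$ and $f(B)$ are as large as possible relative to $W(A,B)$, and verifying that even this extremal configuration stays within $K(K-1)/4$ is the core quantitative check.
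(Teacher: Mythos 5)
Your route is genuinely different from the paper's. The paper proves this theorem in a single sentence, by asserting that the complementary objective $\sum_{i<j}(1-\tilde{\kappa}(s_i,s_j))$ is submodular and invoking Nemhauser et al.'s analysis of the interchange heuristic; all of the combinatorial work is outsourced to that citation. You instead carry out the local-search analysis from scratch, and your first three steps check out: the swap inequality is exactly the termination condition of \interchange, the decomposition of $\Delta$ over $A$, $B$, $C$ is correct, and the summation over $A\times B$ does yield $\Delta \le -f(A)-f(B)+\frac{m-1}{m}W(A,B)$. Up to that point your argument is essentially the Nemhauser-style interchange proof made explicit, which is arguably more informative than the paper's citation --- all the more so because the paper's own appendix lemma in fact derives the \emph{reverse} of the inequality in its stated definition of submodularity (the function $\sum_{i<j}(1-\tilde{\kappa})$ has increasing marginal gains, i.e.\ it is supermodular), so the one-line appeal to Nemhauser et al.\ is itself on shaky ground.

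The gap is in your final step, and it is twofold. First, the Cauchy--Schwarz bound $W(A,B)\le\sqrt{(2f(A)+m)(2f(B)+m)}$ requires $\tilde{\kappa}$ to be positive semidefinite with unit diagonal; this does hold for the paper's concrete $\tilde{\kappa}(s_i,s_j)=\int\kappa(x,s_i)\kappa(x,s_j)\,dx$ after normalization, but it is an assumption the theorem statement never makes (and the paper's NP-hardness reduction elsewhere uses arbitrary $\tilde{\kappa}$ values). Second, and more seriously, if you actually carry out the AM--GM step you obtain $\Delta\le(m-1)-\frac{1}{m}\bigl(f(A)+f(B)\bigr)\le m-1\le K-1$, not $K(K-1)/4$. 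Since $K-1\le K(K-1)/4$ exactly when $K\ge4$, your argument proves the theorem (indeed a much stronger $1/K$ bound) for $K\ge4$, but leaves $K=2$ and $K=3$ open, and the shortfall there is not mere slack: for $K=2$, the PSD unit-diagonal Gram matrix on $\{s_1,s_2,o_1,o_2\}$ with $\tilde{\kappa}(s_1,s_2)=\tilde{\kappa}(s_i,o_j)=(1+\sqrt{17})/8\approx0.64$ and $\tilde{\kappa}(o_1,o_2)=0$ makes $\{s_1,s_2\}$ a terminal state of \interchange with $\Delta\approx0.64>K(K-1)/4=0.5$. So no argument at your level of generality (local optimality plus positive semidefiniteness) can close the small-$K$ cases; you would need either to restrict the theorem to $K\ge4$, where your bound is a genuine improvement over the paper's, or to import additional geometric structure on $\tilde{\kappa}$ (e.g.\ the constraints forced by actual Euclidean point configurations) to handle $K=2,3$.
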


\begin{proof}
Due to the submodularity of VAS, which we show in our technical report~\cite{vas2015arxiv},
we can apply the result of Nemhauser, \etal~\cite{nemhauser1978analysis} and
obtain the result above.
\end{proof}

Ideally, \interchange should be run until no more valid replacements are possible. 
However,  in
practice, we observed that even running the algorithm for half an hour
  produces a high quality sample. When more time is permitted, the algorithm will
continuously improve the sample quality until convergence.

\ph{Speed-Up using the Locality of Proximity function}
Proximity functions such as $\exp(-\|x-y\|^2/\epsilon^2)$ have a property called
\emph{locality}.
The locality property of a proximity function indicates that its value becomes
negligible when the distance between
the two data points is not close---an idea also used in accelerating other
algorithms~\cite{krause2008near}. For example,
our proximity function value is $1.12 \times 10^{-7}$
when the distance between the two points is $4
\epsilon$; thus, even though we ignore pairs whose distance is larger than a
certain threshold, it will not affect the final outcome much. Exploiting this
property, we can make the {\tt\small Expand} and {\tt\small Shrink} operations much faster
by only considering the data points that are close enough to new data points.
For a proximity check, our implementation used R-tree.

\ignore{
\ph{Implementation Details}
We use the following approach to make the Expand and Shrink operations exploit
proximity function's locality more quickly.
Because checking every data
point in a sample to identify the data points close to a new data point incurs
the same amount of time complexity, we create a grid with its cell width equal
to $\epsilon$. Next, we always keep all the sample data in the grid so
that when a new data point is considered for a valid replacement, 
we can
update just the responsibilities of the data points in the cells near the new data
point. \ignore{\tofix{Note that}, when the sample size $K$ is small, \tofix{we consider more
cells near the new data point} \barzan{what does this sentence mean?} because the distances between the data
points tend to be large.}
To quickly identify the data points with the largest
responsibility in the {\tt Shrink} operation, we maintain a max-heap. With
this approach using the proximity function's locality, we can speed up \interchange more
than 50 times when the sample size is 10,000 or larger.
}



\section{Extending VAS: Embedding Density}
\label{sec:density}

\delete{We developed our basic version of \vas (Problem~\ref{prob:vas}), with the goal
of minimizing the visualization quality degradation stemming from the sampling
process. The sample obtained by the basic \vas seems to perform well for the
regression tasks as we will see in Section~\ref{sec:user_perf}. However, we found
that, even the scatter plot drawn using the entire dataset might not be a good
choice for other visualization-driven tasks such as density estimation --- in
dense areas, points are overlapped or cluttered, and eventually, it is hard to
grasp how many points are located in those areas. The situation is not
improved when our basic version of \vas is used for scatter plot visualizations, because
our sample is developed to approximate the shape of the original scatter
plot; thus, it does not choose points according to the density of the original
dataset.}

\vas aims to minimize a visualization-driven quality loss, yielding
scatter/map plots that are highly similar to those generated by visualizing the
entire dataset.
However, we need a different strategy
if the user's intention
 is to estimate the density or find clusters
 from the scatter plot.
This is because humans cannot visually distinguish multiple data points on a scatter plot
if they are duplicates or extremely close to one another. This can make it hard to visually estimate the number or density of such data points.
One way  to address this is to  account for the number of  near-duplicate points in each region. 
For example, points drawn from a dense area can be plotted with a larger legend size or some jitter noise can be used to provide additional density in the plot. 
In other words, the font size of each point or the amount of jitter noise will be proportional to the original density of the region the point is drawn from.
\vas can be easily extended to support such an approach, as follows:

\begin{enumerate}[nolistsep,itemsep=2pt]
\item Obtain a sample using our algorithm for \vas.
\item Attach a counter to every sampled point.
\item While scanning the dataset once more, increase a counter if its associated
sampled point is the nearest neighbor of the data point that was just scanned.
\end{enumerate}
With these extra counters, we can now
visualize the density of areas (each of which is represented by its nearest sampled
point), \eg, using different  dot sizes or by adding jitter noise in
proportion to each point's density.  (See Section~\ref{sec:user} for a scatter
plot example.)

Note that the above process only adds an extra column to the database and, therefore, does
not alter our basic \interchange algorithm for \vas. Also, this extension does
not require any additional information from users.

\delete{On the other hand, we can also show that the sample obtained from \vas is the
\emph{optimal} sample for displaying the density information
when the sample size is constrained to $K$,
and kernel regression~\cite{nadaraya:1964,watson:1964} is used as a way to
measure the quality of estimation for unobserved points. See our technical
report~\cite{vas2015arxiv}
to find that \vas is an optimal sample for regression when the popular
Nadaraya-Watson estimator is used.}

Note that, for the above density embedding process,
a special data structure such as a k-d tree \cite{bentley1975multidimensional} can
be used to make the nearest neighbor tests more efficient. This is done by
using the sample obtained in the first pass to build a k-d tree, then using the
tree to identify the nearest data points in the sample during the second pass.
Since $k$-d trees
perform the nearest neighbor search in $O(\log K)$, the overall time complexity
for the second pass is $O(N \log K)$.



\section{Experiments}
\label{sec:exp}

We run four types of experiments to demonstrate that \vas\ and
\vas\ with density embedding can produce high-quality plots in less time
than competing methods.
\begin{enumerate}[nolistsep,itemsep=2pt]
  \item We study the runtime of existing visualization
        systems that were introduced in Figure~\ref{fig:intro_latency}.
  \item In a user study, we show that users were more successful
        when they used visualizations produced by \vas\ than with other
        methods.  We also show that user success and our loss function
        were negatively correlated (that is, users were
        successful when our loss function is minimized).
  \item We show that \vas\ could obtain a sample of a fixed quality
        level (that is, loss function level) with fewer data points than
        competing methods.  We demonstrate this over a range of
        different datasets and sample quality levels.
  \item We empirically study the \interchange algorithm: we compare its quality and
      runtime to those of the exact method, examine the relationship
      between runtime and sample quality, and investigate the
    impact of our optimization on runtime.
\end{enumerate}

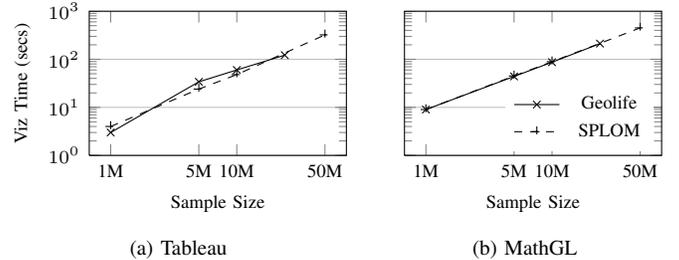
\begin{figure}[!t]
\centering

\begin{minipage}[b]{0.5\textwidth}
\centering

\begin{subfigure}[b]{0.52\textwidth}
\centering
\begin{tikzpicture}
\begin{axis}[
width=50mm,
height=35mm,
xtick={1e6,5e6,1e7,5e7},
xticklabels={1M, 5M, 10M, 50M},
ytick={1,10,100,1000},
ymin=1,
ymax=1000,
xmode=log,
ymode=log,
xlabel=Sample Size,
ylabel=Viz Time (secs),
ymajorgrids,
]
\addplot[mark=x]
table {
x	y
1e6	3
5e6	34
10e6	60
24e6	122
};
\addplot[mark=+,dashed]
table {
x	y
1e6	4
5e6	24
10e6	48
50e6	322
};
\end{axis}
\end{tikzpicture}
\caption{Tableau}
\end{subfigure}
~
\begin{subfigure}[b]{0.4\textwidth}
\centering
\begin{tikzpicture}
\begin{axis}[
width=50mm,
height=35mm,
xtick={1e6,5e6,1e7,5e7},
xticklabels={1M, 5M, 10M, 50M},
ytick={1,10,100,1000},
yticklabels={,,,},
ymin=1,
ymax=1000,
xmode=log,
ymode=log,
xlabel=Sample Size,
legend style={draw=none},
legend pos=south east,
ymajorgrids,
]
\addplot[mark=x]
table {
x	y
1e6	9
5e6	44
10e6	87
24e6	212
};
\addplot[mark=+,dashed]
table {
x	y
1e6	9
5e6	45
10e6	89
50e6	450
};

\addlegendentry{Geolife}
\addlegendentry{SPLOM}
\end{axis}
\end{tikzpicture}
\caption{MathGL}
\end{subfigure}

\end{minipage}

\vspace{-2mm}
\caption{Time to produce plots of various sizes using existing visualization systems.}
\label{fig:viz_latency}
\end{figure}

All of our experiments were performed using two datasets: the Geolife dataset
and the SPLOM dataset.  The
Geolife dataset was collected by
Microsoft
Research~\cite{zheng2008understanding}. It
contained {\sf\small latitude, longitude, elevation} triples from GPS
loggers, recorded mainly around Beijing.  Our full database contained
24.4M tuples.  We also used SPLOM, a synthetic dataset generated from
several Gaussian distributions that had been used in previous visualization
projects~\cite{liu2013immens,kandel2012profiler}.  We used parameters
identical to previous work, and generated a dataset of five columns
and 1B tuples.  We performed our evaluations on an Amazon EC2 memory instance
(r3.4xlarge) which contained 16 cores and 122G memory.

\subsection{Existing Systems are Slow}
\label{sec:latency}

Our discussions in this paper are based on the idea that plotting a scatter
plot using an original dataset takes an unacceptably long period of time.  We tested two state-of-the-art
systems: Tableau~\cite{tableauurl} and MathGL~\cite{mathglurl}. Tableau is one
of the most popular commercial visualization software available on Windows, and
MathGL is an open source scientific plotting library implemented in C++.
We tested both the Geolife and SPLOM datasets.  The results are shown
in Figure~\ref{fig:viz_latency}.

In both
systems, the visualization time includes (1) the time to load data from SSD storage (for
MathGL) or from memory (for Tableau)
and (2) the time to render the data into a plot.  We can
see that even when the datasets contained just 1M records, the
visualization time was more than the 2-second interactive limit.
Moreover, visualization time grew linearly with sample
size.

\ignore{
Table~\ref{tab:viz_online} compares the sample-based visualization system and
the system without any sampling. To process a sample of size 100K, our proposed
method requires about an hour of pre-processing, but once the processing is
finished, the visualization in query times takes only 1.2 seconds.
Without sampling, users of visualization systems should be prepared to wait over
3 minutes whenever requesting a scatter plot, or even longer if the size of the
original dataset is larger.
Notably, the quality of the samples produced by our method were very close to that of the
original dataset, indicating minimal quality degradation as a result of
sampling.
}

\subsection{User Success and Sample Quality}
\label{sec:user}

In this section we make two important experimental claims about user
interaction with visualizations produced by our system.  First, 
users are more successful at our specified goals when using \vas-produced outputs than when using
outputs from uniform random sampling or stratified sampling.  Second,
user success and our loss function --- that is, our measure of
sample quality --- are correlated.  We validate
these claims with a user study performed using Amazon's Mechanical
Turk system.

\subsubsection{User Success}
We tested three user goals: regression, density estimation, and clustering.
\label{sec:user_perf}

\begin{figure}[t]
\centering

\begin{subfigure}[b]{0.48\columnwidth}
\centering
\includegraphics[width=0.9\columnwidth,height=0.7\columnwidth]{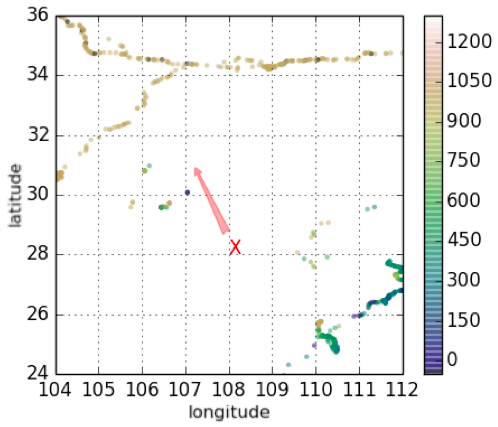}
\caption{Stratified Sampling}
\end{subfigure}
~
\begin{subfigure}[b]{0.48\columnwidth}
\centering
\includegraphics[width=0.9\columnwidth,height=0.7\columnwidth]{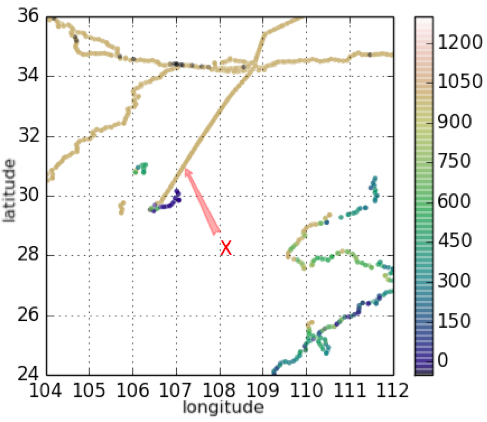}
\caption{\vas}
\end{subfigure}

\vspace{-2mm}

\caption{Example figures used in the user study for the regression task.
We asked the altitude of the location pointed by `X'. The left was
generated by stratified sampling and the right was generated by \vas.}
\label{fig:regression}
\end{figure}

\begin{figure}[t]
\centering

\includegraphics[width=0.9\columnwidth,height=35mm]{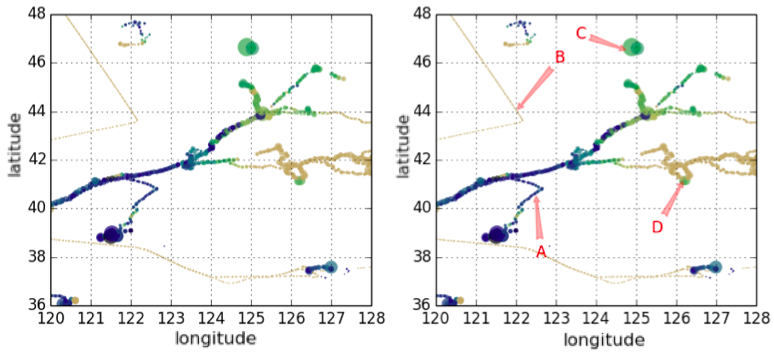}

\caption{An example figure used in the user study for the density estimation task. This
figure was generated using \vas\ with density embedding.
The left-hand image is the original figure.  The right-hand image
contains four test markers, used to ask users to choose the densest
area and the sparsest areas.}
\label{fig:density}
\end{figure}

\ph{Regression} To test user success in the regression task, we gave each user a
sampled visualization from the Geolife data.  We asked the users to estimate the
altitude at a specified latitude and longitude.  Naturally, the
more sampled data points that are displayed near the location in question, the more
accuracy users are likely to achieve.  Figure~\ref{fig:regression}
shows two examples of test visualizations given to users for the
regression task (users were asked to estimate the altitude of the
location marked by `X').  We gave each user a list of four possible
choices: the correct answer, two false answers, and ``I'm not sure''.

We tested \vas,
random uniform sampling, and stratified sampling.  We generated a test
visualization for each sampling method at four distinct sample sizes ranging from
100 to 100K.  For each test visualization, we zoomed into six
randomly-chosen regions and picked a different test location for each region.  Thus, we 
had 72 unique test
questions (3 methods * 4 sample sizes * 6 locations).  We gave each
package of 72 questions to 40 different users 
and averaged the number of correct answers over each distinct question.
To control for worker quality, we filtered out users who failed to correctly
answer a few trivial ``trapdoor'' questions.  

The uniform random sampling method chooses $K$ data points
purely at random, and as a result, tends to choose more data points
from dense areas.  We implemented the single-pass reservoir method for
simple random sampling.  Stratified sampling divides a domain into
non-overlapping bins and performs uniform random sampling for each bin. Here,
the number of the data points to draw for each bin is determined in the most
balanced way. For example, suppose there are two bins and we want a sample of
size 100. If there are enough data points to sample from those two bins, we
sample 50 data points from each bin. Otherwise, if the second bin only has 10
available data points, then we sample 90 data points from the first bin, and 10
data points from the second bin.  Stratified sampling is a straightforward method that avoids uniform
random sampling's major shortcoming (that is, uniform random sampling draws most of its data
points from the densest areas).  In our experiment, stratified
sampling divided the domain of Geolife into 100 exclusive bins and
performed uniform random sampling for each bin using the reservoir
method.


\begin{table}[t]
\caption{\textsc{User Performance in Three Tasks}}
\label{tab:user}

\centering
\small

\begin{subfigure}[b]{0.5\textwidth}
\centering
\begin{tabular}{|r|r|r|r|}
\hline
Sample size & Uniform & Stratified & VAS   \\ \hline
100   & 0.213   & 0.225      & 0.428 \\
1,000 & 0.260   & 0.285      & 0.637 \\
10,000   & 0.215   & 0.360      & 0.895 \\
100,000  & 0.593   & 0.644      & 0.989 \\ \hline
Average   & 0.319   & 0.378      & \textbf{0.734} \\ \hline
\end{tabular}
\caption{Regression}
\end{subfigure}

\vspace{4mm}

\begin{subfigure}[b]{0.5\textwidth}
\centering
\begin{tabular}{|r|r|r|r|r|}
\hline
Sample size & Uniform & Stratified & VAS   & VAS w/ density \\ \hline
100   & 0.092   & 0.524      & 0.323 & 0.369 \\
1,000 & 0.628   & 0.681      & 0.311 & 0.859 \\
10,000   & 0.668   & 0.715      & 0.499 & 0.859 \\
100,000  & 0.734   & 0.627      & 0.455 & 0.869 \\ \hline
Average   & 0.531   & 0.637      & 0.395 & \textbf{0.735} \\ \hline
\end{tabular}
\caption{Density Estimation}
\end{subfigure}

\vspace{4mm}

\begin{subfigure}[b]{0.5\textwidth}
\centering
\begin{tabular}{|r|r|r|r|r|}
\hline
Sample size & Uniform & Stratified & VAS   & VAS w/ density \\ \hline
100   & 0.623   & 0.486      & 0.521 & 0.727 \\
1,000 & 0.842   & 0.412      & 0.658 & 0.899 \\
10,000   & 0.931   & 0.543      & 0.845 & 0.950 \\
100,000  & 0.897   & 0.793      & 0.864 & 0.965 \\ \hline
Average   & 0.821   & 0.561      & 0.722 & \textbf{0.887} \\ \hline
\end{tabular}
\caption{Clustering}
\end{subfigure}

\end{table}

Table~\ref{tab:user}(a) summarizes user success in the regression task. The result shows that
users achieved the highest accuracy in the regression task when
they used \vas, significantly outperforming other sampling methods.

\ph{Density Estimation} For the density estimation
task, we created samples whose sizes ranged 100-100K using four different
sampling methods: uniform random sampling, stratified sampling, VAS,
and VAS with density embedding.  Using those samples, we chose 5 
different zoomed-in areas.  For each zoomed-in area, we asked users to identify the densest
and the sparsest areas among 4 different marked locations.
Figure~\ref{fig:density} shows an example visualization shown to a
test user.  As a result, we generated 80 unique visualizations.  We
again posed the package of 80 questions to 40 unique users, and again
filtered out users who failed to answer easy trapdoor questions.

The result of the density estimation task is shown in Table~\ref{tab:user}(b).
Interestingly, the basic \vas\ method \emph{without} density estimation
yielded very poor results.
However, when we augmented the sample with density
embedding, users obtained even better success than with uniform random sampling. One
of the reasons that `VAS with density' was superior to uniform random sampling
was because we not only asked the users to estimate the densest area, but also
asked them to estimate the sparsest area of those figures. The figures generated by
uniform random sampling typically had few points in sparse areas, making it
difficult to identify the sparsest area.

\ph{Clustering} Lastly, we compared user performance in
the clustering task. Since the Geolife dataset did not have ground-truth
for clustering, we used synthetic datasets that we generated using Gaussian
distributions instead. Using two-dimensional Gaussian distributions with different covariances, we
generated 4 datasets, 2 of which were generated from 2 Gaussian
distributions and the other 2 were generated from a single Gaussian
distribution.  (This dataset was similar to SPLOM, which unfortunately
has a single Gaussian cluster, making it unsuitable for this experiment.)

Using the same 4 sampling methods that were used in the density estimation task,
we created samples whose sizes ranged 100-100K, and tested if users could correctly identify the
number of underlying clusters given the figures generated from those samples.
In total, we created 64 questions (4 methods, 4 datasets, and 4 sample
sizes).  We again asked 40 Mechanical Turk users (or simply Turkers) and filtered out bad workers.

Table~\ref{tab:user}(c) summarizes the result of the clustering task.
As in the density estimation task, `\vas\ with density' allowed users to
be more successful than they were with visualizations from uniform
random sampling.  Although \vas\ without density did not 
perform as well as uniform random sampling, it produced a roughly comparable
score.

We think the reason VAS \emph{without} density estimation showed comparable
performance was that we used no more than 2 Gaussian distributions for data
generation, and the Turkers could recognize the number of the underlying
clusters from the outline of the sampled data points. For example, if the data
were generated from two Gaussian distributions, the data points sampled by VAS
would look like two partially overlapping circles. The Turkers would have shown poorer
performance if there was a cluster surrounded by other clusters.

On the other hand, stratified sampling did poorly in this clustering task
because it performed a separate random sampling for each bin, i.e.,
the data points within each bin tend to group
together, and as a result, the Turkers found that there were 
more clusters than actually existed.

\subsubsection{Correlation with Sample Quality}
\label{sec:corr}

In this section, we test whether the \vas's optimization criterion of
$\loss(S)$ had a close relationship to our visualization users' success
in reaching their end goals.  If they were highly correlated, we have some empirical evidence
that samples which minimize the \vas\ loss function will yield
useful plots.

In particular, we examined this relationship for the case of
regression.  For each combination of sample size and sampling method,
we produced a sample and corresponding visualization.  We computed $\loss(S)$
using the expression in Equation~\ref{eq:loss}.  We then
measured the correlation between the \emph{loss} and average user
performance on the regression task for that visualization.

To compute the loss (Equation~\ref{eq:loss}), which includes integration, we used
the Monte Carlo technique using 1,000 randomly generated points in the domain of
the Geolife dataset. For this process, we determined that randomly generated points
were within the domain if there existed any data point in the original dataset whose
distance to the randomly generated data points was no larger than 0.1. Now, the
integral expression was replaced with a summation as follows:
\[
\loss(S) =
\frac{1}{1000} \sum_{i = 1}^{1000} \frac{1}{\sum_{s_i \in S} \kappa(x_i, s_i)}.
\]
This \emph{loss} computed above is the \emph{mean} of one
thousand values.  One problem we encountered in computing the mean was
that the point-loss often became so large that {\tt\small double} precision could not
hold those quantities. To address this, we used the \emph{median} instead of
the \emph{mean} in this section because the median is less sensitive to outliers.
Note that the median is still valid for a correlation analysis because we did
not observe any case where a sample with a larger mean has a smaller median
compared to another sample.

Next, to compare loss in a more consistent
way, we computed the following quantity:
\[
\text{log-loss-ratio}(S) = \log_{10} \left[ \frac{\loss(S)}{\loss(D)} \right]
\]
where $D$ is the original dataset. $\loss(D)$ is the lowest \emph{loss} that a
sample can achieve; thus, samples with log-loss-ratios close to zero can be
regarded as good samples based on this metric.

Next we examined the relationship between a sample's
log-loss-ratio and the percentage of the questions that were
correctly answered in the regression task using the sample's corresponding
visualization.  If the two metrics yield similar rankings of the
sampled sets, then
the \vas\ optimization criterion is a good guide for producing
end-user visualizations.  If the two metrics yield uncorrelated
rankings, then our \vas\ criterion is faulty.

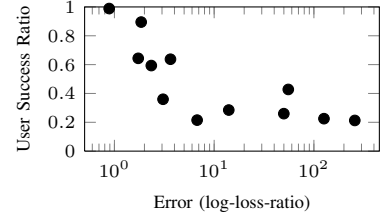
\begin{figure}[t!]
\centering
\begin{tikzpicture}
\begin{axis}[
width=55mm,
height=35mm,
xmode=log,
ymin=0,
ymax=1,
xlabel=Error (log-loss-ratio),
ylabel=User Success Ratio,
]
\addplot[only marks,mark size=2.0,draw=black,fill=black]
table[x=x,y=y] {
x	y
257.0472076	0.2127659574
49.94483465	0.2598870057
6.749301902	0.2150537634
2.343214226	0.5934065934
126.1743124	0.2251308901
14.01240926	0.2849462366
3.072513422	0.3597883598
1.732949296	0.6436170213
55.36107478	0.4278074866
3.647635883	0.6373626374
1.85541853	0.8950276243
0.8860566477	0.9888888889
};


\end{axis}
\end{tikzpicture}
\caption{The relationship between the loss and user performance on the regression
task. The samples with smaller losses resulted in better success
ratios in general in the regression task.}
\label{fig:var_user}
\end{figure}

Figure~\ref{fig:var_user} depicts the results. The figure clearly shows
the negative correlation between the loss
and user success ratio in the regression task. Because the X-axis of the figure
is the loss
function that we aim to minimize to obtain a good sample, the negative correlation
between the two metrics shows the validity of our problem formulation.

Also, when we computed Spearman's rank correlation
coefficient\footnote{Spearman's rank correlation coefficient produces $-1.0$
for pairs of variables that are completely negatively correlated, and $1.0$ for
pairs of variables that
are completely positively correlated.}, the correlation coefficient was $-0.85$, indicating a strong
negative correlation between user success and the log-loss-ratio.  (Its p-value
was $5.2 \times 10^{-4}$.) Put another way, minimizing our loss function for a
sample should do a good job of maximizing user success on the resulting
visualizations.  This result indicates that the problem formulation in
Section~\ref{sec:formulation} and the intuition behind it was largely valid.


\subsection{VAS Uses a Smaller Sample}
\label{sec:var_comp}

This section shows that \vas\ can produce a better sample than
random uniform sampling or stratified sampling.  That is, for a fixed
amount of visualization production time, its quality (loss function value)
is lower; or, that for a fixed quality level (loss function value),
it needs less time to produce the visualization.  (The visualization
production time is linear with the number of data points.)

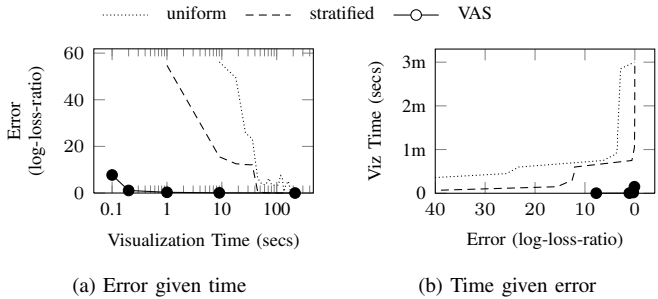
\begin{figure}[t]
\centering

\begin{subfigure}[b]{0.48\columnwidth}
\centering
\begin{tikzpicture}
\begin{axis}[
width=45mm,
height=35mm,
xtick={0.1,1,10,100},
xticklabels={0.1,1,10,100},
ymin=0,
xmode=log,
xlabel=Visualization Time (secs),
ylabel style={align=center},
ylabel=Error\\(log-loss-ratio),
legend style={at={(0.0,1.1)},anchor=south west,draw=none},
legend columns=3,
]
\addplot[mark=none,densely dotted]
table[x=x,y=y] {
x	y
9	56.21720384
18	49.43553602
27	25.72692653
36	23.24105346
45	6.322439246
54	3.745373835
63	3.48268737
72	6.319681812
81	3.468174433
90	3.369245473
99	3.331254949
108	3.409924135
117	7.543737908
126	6.921320444
135	0.606655743
144	2.831318272
162	5.145872202
171	2.850050387
180	0.08898386043
216	0
};
\addplot[mark=none,densely dashed]
table {
x	y
1	54.7470314
9	15.55029342
18	12.56626193
27	12.19196802
36	12.08425294
45	0.5979984235
54	0.3251792837
63	0.07802466885
72	0.06130005143
81	0.04546863489
90	0.03759601179
99	0.03065569413
108	0.02469522715
117	0.0203083769
126	0.01643319329
135	0.01308386191
144	0.01083653746
153	0.01027288458
162	0.007443594562
171	0.006306697363
180	0.004595751689
216	0
};
\addplot[mark=*]
table {
x	y
0.100	7.738679375
0.200	1.098754154
1	0.3141138576
9	0.07802466885
216	0
};

\addlegendentry{uniform}
\addlegendentry{stratified}
\addlegendentry{VAS}

\end{axis}
\end{tikzpicture}
\caption{Error given time}
\end{subfigure}
~
\begin{subfigure}[b]{0.48\columnwidth}
\centering
\begin{tikzpicture}
\begin{axis}[
width=45mm,
height=35mm,
ytick={0,60,120,180,240},
yticklabels={0,1m,2m,3m,4m},
ymin=0,
xmax=40,
x dir=reverse,
xlabel=Error (log-loss-ratio),
ylabel=Viz Time (secs),
legend style={at={(1.1,1)},anchor=north west,draw=none}
]
\addplot[mark=none,densely dotted]
table[x=x,y=y] {
x	y
56.21720384	9
49.43553602	18
25.72692653	27
23.24105346	36
6.322439246	45
3.745373835	54
3.48268737 	63
3.468174433	81
3.369245473	90
3.331254949	99
2.850050387	171
0.088983860	180

};
\addplot[mark=none,densely dashed]
table [x=x,y=y] {
x	y
54.747031	1
15.550293	9
12.566261	18
12.191968	27
12.084252	36
0.5979984	45
0.3251792	54
0.0780246	63
0.0613000	72
0.0454686	81
0.0375960	90
0.0306556	99
0.0246952	108
0.0203083	117
0.0164331	126
0.0130838	135
0.0108365	144
0.0102728	153
0.0074435	162
0.0063066	171
0.0045957	180

};
\addplot[mark=*]
table [x=x,y=y] {
x	y
7.73867	0.10
1.09875	0.20
0.31411	1
0.07802	9
};

\end{axis}
\end{tikzpicture}
\caption{Time given error}
\end{subfigure}

\caption{Relationship between visualization production time and error
for the three sampling methods.}
\label{fig:variance}
\end{figure}

We used the Geolife dataset and produced samples of various sizes (and
thus, different visualization production times).
Figure~\ref{fig:variance}(a) shows the results when we varied the
visualization time: \vas\ always produced a sample with lower loss
function values (\ie, higher quality) than other methods.  The quality gap
between the methods did not become smaller until after an entire minute
of runtime.  We show a similar result with the other dataset in our technical
report~\cite{vas2015arxiv}

Figure~\ref{fig:variance}(b) shows the same data using a different perspective. We
fixed the loss function value (quality) and measured how long it takes
to visualize the corresponding samples.   Because
the samples generated by our method had much smaller losses compared
to other methods, all of the data points in the figure are in the
bottom right corner.  Competing methods required much more time than
\vas\ to obtain the same quality (loss function value).

\delete{
\mike{The below discussion is totally unconvincing. We really should
have an experiment that directly examines the issue.}
It is possible that one could improve the performance of stratified sampling by
carefully adjusting the number of its bins (which was fixed to $100$ in our
experiments) based on the sample size and the underlying data distribution.
However, even when we \addnew{increased the strata count for stratified sampling to
the one equal to the sample size, stratified sampling still
showed poor performance especially for zoomed-in plots (as depicted in
Figure~\ref{fig:intro}).}}

\delete{tested stratified sampling with various bin numbers, it always
resulted in higher loss values (thus, poorer quality) compared to \vas. Also,
\vas does not require any repeated testing with different parameters, which
indicates that data analysts need less \emph{a priori} knowledge when our
sampling technique is used.}

\subsection{Algorithmic Details}
\label{sec:runtime_anal}
\label{sec:exact}
\label{sec:runtime_comp}

We now examine three internal qualities of the \vas\ technique:
approximate vs.~exact solution, runtime analysis, and optimization
contributions.

\ph{Exact vs.~Approximate}
The NP-hardness of \vas supports the need for an approximation algorithm. This
section empirically examines the NP-hardness of VAS.

We think one of the best options for obtaining an exact solution to VAS is by
converting the problem to an instance of integer programming and
solving it using a standard library. Refer to our report~\cite{vas2015arxiv} for converting VAS to
an instance of Mixed Integer Programming (MIP). We used the GNU Linear
Programming Kit~\cite{makhorin2004glpk} to solve the converted MIP problem.

\begin{table}[t]
  \caption{\textsc{Loss and Runtime Comparison}}
\label{tab:mip}

\centering
\small
\begin{tabular}{|c|l|r|r|r|}
\hline
N                   & Metric          & MIP     & Approx. VAS  & Random   \\ \hline
\multirow{3}{*}{50} & Runtime         & 1m 7s   & 0s       & 0s       \\
                    & Opt. objective  & 0.160   & 0.179    & 3.72     \\
                    & $Loss(S)$ & 1.5e+26 & 1.5e+26  & 2.5e+29  \\ \hline
\multirow{3}{*}{60} & Runtime         & 1m 33s  & 0s       & 0s       \\
                    & Opt. objective  & 0.036   & 0.076    & 3.31     \\
                    & $Loss(S)$ & 3.8e+11 & 1.6e+16  & 2.5e+29  \\ \hline
\multirow{3}{*}{70} & Runtime         & 14m 26s & 0s       & 0s       \\
                    & Opt. objective  & 0.047   & 0.048    & 3.02     \\
                    & $Loss(S)$ & 1.8e+13 & 1.8e+13  & 9.45e+33 \\ \hline
\multirow{3}{*}{80} & Runtime         & 48m 55s & 0s       & 0s       \\
                    & Opt. objective  & 0.043   & 0.048    & 2.25     \\
                    & $Loss(S)$ & 8.5e+13 & 1.8e+13 & 9.4e+35  \\ \hline
\end{tabular}
\end{table}

Table~\ref{tab:mip} shows the time it took to obtain exact solutions to VAS
with datasets of very small sizes. The sample size $K$ was fixed to 10 in
all of the experiments in the table.  According to the result, the
exact solutions to VAS showed better quality, but the procedure to
obtain them took considerably longer. As shown, obtaining an exact
solution when $N = 80$ took more than 40 minutes, whereas the time it
took by other sampling methods was negligible.  Clearly, the exact
solution is not feasible except for extremely small data sizes.

\input{tab_runtime}

\ph{Runtime Analysis}
\vas gradually improves its sample quality as more data is read and
processed. As observed in
many statistical optimization routines, \vas offers good-quality plots long
before reaching its optimal state.
To investigate this phenomenon, we measured the relationship between ``processing time'' and
``visualization quality.'' The result is shown in Figure~\ref{fig:runtime}. Note that the
Y-axis of the figure is the objective\footnote{We scaled the objectives
appropriately for a clearer comparison.} of our minimization problem; thus, the
lower the objective, the higher the associated visualization's quality. 
In this experiment, we used the Geolife dataset. Figure~\ref{fig:runtime}  demonstrates that our
\interchange algorithm improved the visualization quality quickly at its initial
stages, and the improvement rate slowed down gradually. Notably, \vas produced
low-error plots within only tens of minutes of processing time. The storage
overhead of our algorithm is only $O(K)$, where $K$ is the sample size.

\delete{
Note that the \vas method is designed for offline uses, and we want to see 
if our algorithm can finish sampling processes in a reasonable amount of
time even for very large datasets. Figure~\ref{fig:runtime}
reports our runtime experiments with the Geolife dataset and the SPLOM dataset.
To obtain the samples of sizes 100K and 1M for the Geolife dataset, our
algorithm took about 1 and 7 hours respectively. For the SPLOM dataset of 1B
tuples, our algorithm took about 7 and 12 hours respectively. Although we showed
that our current implementation can process large datasets, we are working to
further improve its runtime by performing more aggressive approximations
including pre-shuffling and hierarchical approaches.\footnote{We observed that
the runtime can be improved greatly by running multiple \vas in parallel after
discretizing the space into multiple bins; those multiple samples are combined
and another \vas is run on the merged set.}}

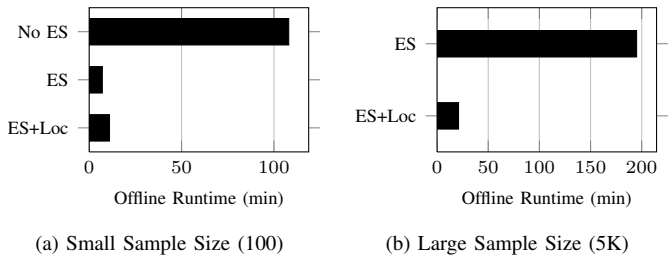
\begin{figure}[t]
  \centering

  \pgfplotsset{mybar/.style={
      width=45mm,
      height=35mm,
      xbar,
      xmin=0,
      ymin=0.5,
      ymax=3.5,
      ytick={1,2,3},
      yticklabels={ES+Loc,ES,No ES},
      xlabel=Offline Runtime (min),
      xmajorgrids,
  }}

  \begin{subfigure}[b]{0.48\columnwidth}
    \begin{tikzpicture}
      \begin{axis}[mybar]
        \addplot[fill=black] coordinates {
          (11,1)
          (7.12,2)
          (108,3)
        };
      \end{axis}
    \end{tikzpicture}
    \caption{Small Sample Size (100)}
  \end{subfigure}
  ~
  \begin{subfigure}[b]{0.48\columnwidth}
    \begin{tikzpicture}
      \begin{axis}[mybar,
          ytick={1,2},
          yticklabels={ES+Loc,ES},
          ymax=2.5,
        ]
        \addplot[fill=black] coordinates {
          (21,1)
          (195,2)
        };
      \end{axis}
    \end{tikzpicture}
    \caption{Large Sample Size (5K)}
  \end{subfigure}

  \caption{Runtime comparison of different levels of optimizations. For
  this experiment, we used the Geolife dataset. ES+Loc indicates that both
Expand/Shrink (ES) operation and the locality of a proximity function were used.}
  \label{fig:breakdown}
\end{figure}

\ph{Optimization Contribution}
\addnew{To quantify the impact of our optimization efforts on the
runtime reduction, we measured the runtime of three different settings:
\begin{enumerate}[nolistsep,itemsep=2pt]
  \item No Expand/Shrink (No ES): This is the most basic configuration that does
    not use the Expand/Shrink approach, but instead compares the responsibility
    when a new point is switched with another one in the sample.
  \item Expand/Shrink (ES): This version uses the Expand/Shrink operation,
    reducing the time complexity by $O(K)$, where $K$ is the sample size.
  \item Expand/Shrink+Locality (ES+Loc): This version uses an additional R-tree
    to speed up the Expand/Shrink operations. This version is possible due to
    the locality of our loss function.
\end{enumerate}
Figure~\ref{fig:breakdown} shows the results. When the sample size was relatively small (100), the second
approach (Expand/Shrink), which does not exploit the locality, showed the shortest
runtime due to no extra overhead coming from maintaining an extra R-tree data
structure. However, when the sample size was relatively large (5K), the last
approach (ES+Loc) that exploits the locality of the loss function showed the
fastest runtime. When the user is interested in large samples (more than 10K at
least), the last approach that uses R-tree to exploit locality will be the
most preferable choice. The runtime sensitivity to sample size suggests that in
the future, it may be useful to employ an optimizer that chooses the most
appropriate algorithm setting, given a requested sample size.}



\ignore{
\subsection{Time Complexity Analysis}
\label{sec:runtime}

\begin{figure}[t]
\centering

\begin{subfigure}[b]{0.5\textwidth}
\centering
\begin{tikzpicture}
\begin{axis}[
width=60mm,
xmin=0.5,
xmax=3.5,
ymin=0,
ymax=6,
xtick={1,2,3},
xticklabels={100,1000,10000},
ytick={0,2,...,6},
xlabel=Sample Size,
ylabel=$\log_{10} \text{sec}$,
legend style={at={(1.1,1)},anchor=north west,draw=none},
]
\addplot[mark=x]
table {
x	y
1	0.30
2	0.30
3	0.30
};
\addplot[mark=triangle,dashed]
table {
x	y
1	3.13
2	5.08
};
\addplot[mark=*]
table {
x	y
1	1.47
2	2.46
3	3.56
};
\addlegendentry{Random}
\addlegendentry{\greedy}
\addlegendentry{\interchange}
\end{axis}
\end{tikzpicture}
\caption{Runtime w.r.t.~sample size}
\end{subfigure}

\vspace{4mm}

\begin{subfigure}[b]{0.5\textwidth}
\centering
\begin{tikzpicture}
\begin{axis}[
width=60mm,
xtick={1,2,3},
xticklabels={1000,10000,100000},
ytick={-2,0,...,4},
xmin=0.5,
xmax=3.5,
ymin=-2,
ymax=5,
xlabel=Dataset Size,
ylabel=$\log_{10} \text{sec}$,
legend style={at={(1.1,1)},anchor=north west,draw=none},
]
\addplot[mark=x]
table {
x	y
1	-1.70
2	-0.70
3	0.30
};
\addplot[mark=triangle,dashed]
table {
x	y
1	3.07
2	4.1
};
\addplot[mark=*]
table {
x	y
1	0.45
2	1.45
3	2.45
};
\addlegendentry{Random}
\addlegendentry{\greedy}
\addlegendentry{\interchange}
\end{axis}
\end{tikzpicture}
\caption{Runtime w.r.t.~dataset size}
\end{subfigure}

\caption{Time complexity analysis of methods for VAS. We included simple random
sampling (Random) for reference.}
\label{fig:runtime}
\end{figure}

This section shows by experiments the correctness of the time complexity analysis of
the methods for \vas that were presented in Section~\ref{sec:balanced}.
According to our analysis, the time complexity of \greedy is $O(NK^2)$, and the
time complexity of \interchange is $O(NK)$. For the experiments in this section,
we used synthetic datasets generated from two Gaussian distributions to have
tight control over the dataset sizes, but since the number of computations
required to produce a certain size of sample does not depend on the data
distribution in datasets, our empirical analysis in this section should be
applicable to all other datasets.

Since the time complexities of the two algorithms are the functions of dataset
size and sample size, we performed two sets of experiments: first, we fixed the
dataset size to 1 million and changed the sample size from 100 to 10,000;
second, we fixed the sample size to 1,000 and changed the dataset size from
1.000 to 100,000. Figure~\ref{fig:runtime}a and
Figure~\ref{fig:runtime}b show the results of the two sets of experiments
respectively. Note that Y-axis, which shows elapsed time, is in log-scale. In
those experiments, we included simple random sampling for reference, because it
would probably be the fastest sampling method, although the loss of the
samples produced by the method is significantly higher than those by our proposed
methods.

In Figure~\ref{fig:runtime}a, we can observe that as the sample size increase by
a factor of 10, the runtime of \interchange increases by about 1 in log-scale,
and the runtime of \greedy increases by about 2 in log-scale. This tells us that
the time complexity of \interchange is in linear relationship with respect to
the sample size, and the time complexity of \greedy is in quadratic relationship
with respect to the sample size, as analyzed earlier. In
figure~\ref{fig:runtime}b, we can observe that the runtimes of the both methods,
\interchange and \greedy, increase by about 1 in log-scale as the dataset size
increases by a factor of 10. This shows that the time complexities of both
algorithms are the linear functions of dataset sizes as expected.
}

\section{Related Work}
\label{sec:relatedwork}

Support for interactive visualization of large datasets is a fast-growing
area of research interest~\cite{battle:2013, cottam:2013,
heer:2012, wickham:2013, liu2013immens, lins:2013, barnett:2013,
fisher:2012, fisher:2012b, cottam:2010, piringer:2009}, \addnew{along with other
approximate techniques for interactive processing of non-conventional
queries~\cite{mozafari_pvldb2015_ksh}.}  Most of the
work to date has originated from the user interaction community, but
researchers in database management have begun to study the problem.
Known approaches fall into a few different categories.

The most directly related work is that of Battle,
\etal~\cite{battle:2013}.  They proposed ScalaR, a system for {\em
dynamic reduction} of query results that are too large to be
effectively rendered on-screen.  The system examines queries sent from
the visualization system to the RDBMS and if necessary, inserts
aggregation, sampling, and filtering query operators.  ScalaR uses
simple random sampling, and so could likely be improved by adopting
our sampling technique. For bar graphs, Kim \etal~\cite{kim2015rapid}
proposed an order-preserving sampling method, which examines fewer tuples than
simple random sampling.


{\em Binned aggregation} approaches~\cite{lins:2013, liu2013immens, wickham:2013}
reduce data by dividing a data domain into tiles or bins, which correspond to
materialized views.  At visualization time, these bins can be selected
and aggregated to produce the desired visualization.  Unfortunately,
the exact bins are chosen ahead of time, and certain operations ---
such as zooming --- entail either choosing a very small bin size (and
thus worse performance) or living with low-resolution results.
Because binned aggregation needs to pre-aggregate all the quantities in
advance, the approach is less flexible when the data is changing, such
as measured temperatures over time; our method does not have such a
problem.


Wickham~\cite{wickham:2013} proposed to improve visualization times with a mixture of
binning and summarizing (very similar to binned aggregation) followed
by a statistical smoothing step.  The smoothing step allows the
system to avoid problems of high variability, which arise when the
bins are small or when they contain eccentric values.  However, the
resulting smoothed data may make the results unsuitable for certain
applications, such as an outlier finding. This smoothing step
itself is orthogonal to our work, i.e., when there appears to be high variability in
the sample created by our proposed method, the same smoothing technique can be
applied to present more interpretable results. The smoothing process also
benefits from our method because \vas\ creates a sample much smaller than the
original database, thus, makes smoothing faster.
The {\em abstract rendering pipeline}~\cite{cottam:2013} also maps bins to
regions of data, but the primary goal of this system is to modify the
visualization, not performance.

{\em Parallel rendering} exploits parallelism in hardware to speed up visual
drawing of the visualization~\cite{cottam:2010, piringer:2009}.  It is
helpful but largely orthogonal to our contributions. {\sc SeeDB} is a system
that discovers the most interesting bar graphs~\cite{vartak:2015} from datasets.

{\em Incremental visualization} proposes a streaming data processing model,
which quickly yields an initial low-resolution version of the user's
desired bitmap~\cite{fisher:2012, fisher:2012b}.  The system continues to process data after showing
the initial image and progressively refines the visualization.  When
viewed in terms of our framework in Section~\ref{sec:vizsystems}, this
method amounts to increasing the sample budget over time and
using the new samples to improve the user's current visualization.
Thus, incremental visualization and sample-driven methods should
benefit from each other.




\section{Conclusions and Future Work}
\label{sec:con}
We have described the \vas\ method for visualization data reduction.
\vas\ is able to choose a subset of the original database that is very
small (and thus, fast) while still yielding a high-quality scatter or
map plot.  Our user study showed that for three common user goals ---
regression, density estimation, and clustering --- \vas\ outputs are
substantially more useful than other sampling methods' outputs with
the same number of tuples.

\delete{We further showed that the \vas\ offline
phase, while more time-consuming than competitors', is still feasible
for large databases.}

\delete{We envision two areas for extending this work in the future.  First,
we want to research some of the connections between our proposed
algorithm and other domains not directly related to visualization,
such as optimal sensor placement~\cite{krause2008near}.  Second,}

We believe our core topic --- data system support for visualization
 tools --- is still in its infancy and entails a range of interesting
    challenges.  In particular, we plan 
 to investigate techniques for rapidly generating visualizations
for other user goals (including outlier detection, trend
identification) and other data types (such as
large networks).  

\section*{Acknowledgment}
This work is in part sponsored by NSF awards ACI-1531752, CNS-1544844,
and III-1553169. The authors are also grateful to Suchee Shah for her great comments on this manuscript.


%
%



\bibliographystyle{IEEEtran}
\bibliography{viz_icde16,mozafari}
%
%
%

%
%
%
%
%
%
%
%
%
%
%

\end{document}